\documentclass[submission,copyright,creativecommons]{eptcs}
\providecommand{\event}{SOS 2007} 

\usepackage{iftex}

\ifpdf
  \usepackage{underscore}         
  \usepackage[T1]{fontenc}        
\else
  \usepackage{breakurl}           
\fi

\title{Arithmetical Characterizations of Two Intuitionistic Logical Principles}
\author{Kaito Ichikura
\institute{Tohoku University, Graduate School of Information Sciences\\
Sendai, Japan}
\email{ichikura.kaito.t7@dc.tohoku.ac.jp}
}
\def\titlerunning{Arithmetical Characterizations of Two Intuitionistic Principles}
\def\authorrunning{Kaito Ichikura}

\usepackage{amsmath}
\usepackage{amsfonts}
\usepackage{amsthm}
\usepackage{amssymb}
\usepackage{amscd}
\usepackage{wrapfig}
\usepackage{multicol}
\usepackage{enumitem}
\setlist{itemsep=0pt, parsep=0pt}
\usepackage{bussproofs}

\newcommand{\dotminus}{\mathbin{\dot{-}}}

\theoremstyle{definition}
\newtheorem{dfn}{Definition}[section]
\newtheorem{prop}[dfn]{Proposition}
\newtheorem{lem}[dfn]{Lemma}
\newtheorem{thm}[dfn]{Theorem}
\newtheorem{cor}[dfn]{Corollary}
\newtheorem{rem}[dfn]{Remark}

\begin{document}
\maketitle
	
	\begin{abstract}
    This paper investigates the role of negation in arithmetic by analyzing subsystems based on subminimal logic. We introduce $\mathbf{NA}^{PR}$, a system of subminimal arithmetic equipped with all primitive recursive functions, and answer the following four questions: (1) Can we develop reasonable arithmetic within $\mathbf{NA}^{PR}$? (2) Can we characterize logical rules Absorption of Negation ($\mathsf{An}$) and Ex Falso Quodlibet  ($\mathsf{ECQ}$) by arithmetical principles? (3) Are there strict differences in deductive power among $\mathbf{NA}^{PR}$, $\mathbf{MA}^{PR}$, and $\mathbf{HA}$? (4) Do $\mathbf{MA}^{PR}$ and $\mathbf{MA}$, and $\mathbf{NA}^{PR}$ and $\mathbf{NA}$ have a similar mutual interpretability relationship as $\mathbf{HA}$? We show the following: First, Heyting Arithmetic ($\mathbf{HA}$) is shown to be a conservative extension of $\mathbf{NA}^{PR}$ for positive formulas. Based on $\mathbf{NA}^{PR}$, we prove that $(\mathsf{ECQ})$ is equivalent to the definability of disjunction, and that $(\mathsf{An})$ is equivalent to a variant of the least number principle. Furthermore, using the constant domain neighbourhood semantics, we show that the systems $\mathbf{NA}^{PR}$, $\mathbf{MA}^{PR}$ (written as $\mathbf{NA}^{PR}+(\mathsf{An}$)) and $\mathbf{HA}$ (written as $\mathbf{NA}^{PR}+(\mathsf{An})+(\mathsf{ECQ})$) form a strictly increasing hierarchy of strength. Finally, we demonstrate the essential role of primitive recursive functions in weak arithmetic using classical countermodels.
\end{abstract}
	
	\section{Introduction}
	
	Non-classical logics, particularly in the style of sequent calculus \cite{gentzen1935untersuchungen} for intuitionistic logic ($\mathsf{LJ}$), have been widely studied to understand the constructive content of mathematical reasoning. By discarding the inference rule correspoding to Ex Falso Quodlibet ($\mathsf{ECQ}$): $A\to(\neg A\to B)$ from $\mathsf{LJ}$, we obtain sequent calculus for minimal logic  ($\mathsf{LM}$) (cf.\ \cite{johansson1937minimalkalkul}). A further weakening, which also drops the inference rule correspond to Absorption of Negation ($\mathsf{An}$): $(A \to \neg A) \to \neg A$, leads to sequent calculus for logic $\mathsf{N}$ ($\mathsf{LN}$) (cf.\ \cite[Section 3]{bilkova2020proof}, \cite{vakarelov2005nelson, bsl24752}), which is the base subminimal logic equipped only with the congruence inference rule ($\mathsf{N}$): $(A \leftrightarrow B) \to (\neg A \leftrightarrow \neg B)$ for negation.
	
	Analyzing arithmetical systems based on subminimal logic allows us to finely calibrate the role of negation in mathematical theories. By stripping negation of its powerful operational rules, we can observe which arithmetical theorems purely rely on the positive structure of numbers and which strictly require the specific logical properties of negation. We employ $\mathbf{NA}^{PR}$, which consists of $\mathsf{LN}$ and all non-logical axioms and rules of primitive recursive functions and induction, as the base theory for our investigation. Then we can obtain systems $\mathbf{MA}^{PR}$ of minimal arithmetic and $\mathbf{HA}^{PR}$ of Heyting arithmetic by adding $(\mathsf{An})$ and $(\mathsf{An})+(\mathsf{ECQ})$ to $\mathbf{NA}^{PR}$, respectively. 

	At this point, several natural questions arise:
	\begin{enumerate}
		\item Can we develop reasonable arithmetic within $\mathbf{NA}^{PR}$?
		\item It is known that over $\mathbf{HA}$, the Law of Excluded Middle ($\text{LEM}$) is characterized by the Least Number Principle ($\text{LNP}$) \cite{fujiwara2022refining}. Can analogous characterizations be possible for $(\mathsf{An})$ and $(\mathsf{ECQ})$?
		\item Are there strict differences in deductive power among $\mathbf{NA}^{PR}$, $\mathbf{MA}^{PR}$, and $\mathbf{HA}^{PR}$?
		\item It is well-known that $\mathbf{HA}^{PR}$ is mutually interpretable with $\mathbf{HA}$, which is formulated in the simpler basic language $\mathcal{L}_A$ consisting of $0, S, +, \cdot$ and $ = $ (cf.\ \cite{avigad2022mathematical}). Do $\mathbf{MA}^{PR}$ and $\mathbf{MA}$, and $\mathbf{NA}^{PR}$ and $\mathbf{NA}$ have a similar relationship?
	\end{enumerate}

	This paper answers these questions. Section \ref{sec:preliminaries} introduces the proof systems of subminimal logics and arithmetic. Addressing the first question, Section \ref{sec:basic-prop} gives some basic properties of $\mathbf{NA}^{PR}$ and proves the conservativity of $\mathbf{HA}^{PR}$ over $\mathbf{NA}^{PR}$ for positive formulas. To answer the second question, Section \ref{sec:equiv} explores equivalences between logical rules and arithmetical principles: we show that $(\mathsf{ECQ})$ is equivalent to the definability of disjunction, and we introduce the Non-contradictory Bounded Least Number Principle ($\text{NBP}$) to characterize $(\mathsf{An})$. For the third question, we develop a constant domain neighbourhood semantics, which is a predicate extension of the neighbourhood semantics in \cite[Section 4]{colacito2016minimal}. And we present countermodels separating $(\mathsf{An})$ and $(\mathsf{ECQ})$ over $\mathbf{NA}^{PR}$ in Sections \ref{sec:sep-na-ma} and \ref{sec:sep-ma-ha}. Finally, resolving the fourth question, Section \ref{sec:sep-lang} isolates the impact of primitive recursive function symbols by separating $\mathbf{NA}^{PR}$ from $\mathbf{NA}$ (the system without primitive recursive symbols) using classical countermodels.

	\section{Preliminaries}\label{sec:preliminaries}

    This section introduces our basic logical language of predicate logic, the sequent calculus $\mathsf{LN}$, and the formal definition of the subminimal arithmetical system $\mathbf{NA}^{PR}$, which serves as a base theory for our investigation.

	\begin{dfn}[Logical language and formulas]\label{dfn:language}
		The language consists of the logical connectives $\land, \lor, \to, \neg$, the quantifiers $\forall, \exists$, a countably infinite set of bound variables (typically denoted by $x, y, z, \dots$), and a countably infinite set of free variables (typically denoted by $a, b, c, \dots$), which are assumed to be mutually disjoint. 
	\end{dfn}

    Note that we do not include $\bot$ in the language and that $\neg$ is not defined by $\to$ and $\bot$.
		Formulas are constructed from atomic formulas, the logical connectives $\land, \lor, \to, \neg$, and the quantifiers $\forall, \exists$. A \textit{positive formula} is a formula that does not contain the negation connective $\neg$.
	
	\begin{dfn}[Sequents, initial sequents, and inference rules of $\mathsf{LN}$]\label{dfn:ln}
		
		A \textit{sequent} is an expression of the form $\Gamma \vdash C$, where $\Gamma$ is a finite sequence of formulas and $C$ is a formula.
		The sequent calculus $\mathsf{LN}$ consists of the following initial sequents and rules.
		
		\noindent
		\underline{Initial sequents}: {\centering $A \vdash A$}
		
		\noindent
		\underline{Structural inference rules}:
		
		\leftline{
			$\begin{array}{cc}
				\parbox{.45\linewidth}{
					\begin{prooftree}
						\AxiomC{$\Gamma\vdash C$}
						\RightLabel{$(\mathsf{weakening})$}
						\UnaryInfC{$A,\Gamma\vdash C$}
				\end{prooftree}}&
				\parbox{.45\linewidth}{\begin{prooftree}
						\AxiomC{$A,A,\Gamma\vdash C$}
						\RightLabel{$(\mathsf{contraction})$}
						\UnaryInfC{$A,\Gamma\vdash C$}
				\end{prooftree}}\\
				\parbox{.45\linewidth}{\begin{prooftree}
						\AxiomC{$\Gamma,A,B,\Sigma\vdash C$}
						\RightLabel{$(\mathsf{exchange})$}
						\UnaryInfC{$\Gamma,B,A,\Sigma\vdash C$}
				\end{prooftree}}&
				\parbox{.45\linewidth}{\begin{prooftree}
						\AxiomC{$\Gamma\vdash A$}
						\AxiomC{$A,\Sigma\vdash C$}
						\RightLabel{$(\mathsf{cut})$}
						\BinaryInfC{$\Gamma,\Sigma\vdash C$}
				\end{prooftree}}
			\end{array}$}
		
		\noindent
		\underline{Inference rule for logical connectives}:
		
		\leftline{$\begin{array}{cc}
				\parbox{.45\linewidth}{
					\begin{prooftree}
						\AxiomC{$A,B,\Gamma\vdash C$}
						\RightLabel{$(\land\mathsf{L})$}
						\UnaryInfC{$A\land B,\Gamma\vdash C$}
				\end{prooftree}}
				&
				\parbox{.45\linewidth}{
					\begin{prooftree}
						\AxiomC{$\Gamma\vdash A$}
						\AxiomC{$\Gamma\vdash B$}
						\RightLabel{$(\land\mathsf{R})$}
						\BinaryInfC{$\Gamma\vdash A\land B$}
				\end{prooftree}}
			\end{array}$}
		
		\leftline{
		$\begin{array}{ccc}
			\parbox{.32\linewidth}{
				\begin{prooftree}
					\AxiomC{$A,\Gamma\vdash C$}
					\AxiomC{$B,\Gamma\vdash C$}
					\RightLabel{$(\lor\mathsf{L})$}
					\BinaryInfC{$A\lor B,\Gamma\vdash C$}
			\end{prooftree}}
			&
			\parbox{.32\linewidth}{
				\begin{prooftree}
					\AxiomC{$\Gamma\vdash A$}
					\RightLabel{$(\lor\mathsf{R}_1)$}
					\UnaryInfC{$\Gamma\vdash A\lor B$}
			\end{prooftree}}
			&
			\parbox{.32\linewidth}{
				\begin{prooftree}
					\AxiomC{$\Gamma\vdash B$}
					\RightLabel{$(\lor\mathsf{R}_2)$}
					\UnaryInfC{$\Gamma\vdash A\lor B$}
			\end{prooftree}}
		\end{array}$}
		
		\leftline{$\begin{array}{cc}
				\parbox{.45\linewidth}{
					\begin{prooftree}
						\AxiomC{$\Gamma\vdash A$}
						\AxiomC{$B,\Sigma\vdash C$}
						\RightLabel{$(\to\mathsf{L})$}
						\BinaryInfC{$A\rightarrow B,\Gamma,\Sigma\vdash C$}
				\end{prooftree}}
				&
				\parbox{.45\linewidth}{
					\begin{prooftree}
						\AxiomC{$A,\Gamma\vdash B$}
						\RightLabel{$(\to\mathsf{R})$}
						\UnaryInfC{$\Gamma\vdash A\rightarrow B$}
				\end{prooftree}} \\
				\parbox{.45\linewidth}{\begin{prooftree}
						\AxiomC{$A,\Gamma\vdash B$}
						\AxiomC{$B,\Gamma\vdash A$}
						\RightLabel{$(\mathsf{N})$}
						\BinaryInfC{$\neg A,\Gamma\vdash \neg B$}
				\end{prooftree}}
				&
			\end{array}$}
		
		\leftline{$\begin{array}{cc}
				\parbox{.45\linewidth}{
					\begin{prooftree}
						\AxiomC{$A(a),\Gamma\vdash B$}
						\RightLabel{$(\exists\mathsf{L})$}
						\UnaryInfC{$\exists xA[a/x],\Gamma\vdash B$}
				\end{prooftree}}
			&
		\parbox{.45\linewidth}{
			\begin{prooftree}
				\AxiomC{$\Gamma\vdash A[a/t]$}
				\RightLabel{$(\exists\mathsf{R})$}
				\UnaryInfC{$\Gamma\vdash \exists xA[a/x]$}
		\end{prooftree}}\\
		\parbox{.45\linewidth}{
			\begin{prooftree}
				\AxiomC{$A[a/t],\Gamma\vdash B$}
				\RightLabel{$(\forall\mathsf{L})$}
				\UnaryInfC{$\forall xA[a/x],\Gamma\vdash B$}
		\end{prooftree}}
		&
		\parbox{.45\linewidth}{
			\begin{prooftree}
				\AxiomC{$\Gamma\vdash A(a)$}
				\RightLabel{$(\forall\mathsf{R})$}
				\UnaryInfC{$\Gamma\vdash\forall xA[a/x]$}
		\end{prooftree}}
	\end{array}$}
		In $(\exists\mathsf{R})$ and $(\forall\mathsf{L})$, $t$ is any term. In $(\exists\mathsf{L})$ and $(\forall\mathsf{R})$, an eigenvariable $a$ does not occur in $\Gamma$ and $B$.
        
		\noindent
		\underline{Additional inference rules}:
		
		\leftline{
			$\begin{array}{cc}
				\parbox{.45\linewidth}{
					\begin{prooftree}
						\AxiomC{$\Gamma\vdash A$}
						\RightLabel{$(\mathsf{ECQ})$}
						\UnaryInfC{$\Gamma, \neg A \vdash B$}
				\end{prooftree}}
				&
				\parbox{.45\linewidth}{
					\begin{prooftree}
						\AxiomC{$\Gamma, A \vdash \neg A$}
						\RightLabel{$(\mathsf{An})$}
						\UnaryInfC{$\Gamma\vdash \neg A$}
				\end{prooftree}}
			\end{array}$}

            We say a formula is \textit{derivable} in a system if there is a proof tree concluding its corresponding sequent (with an empty antecedent) using the initial sequents and inference rules of the system. A scheme is derivable if all its instances are derivable. An inference rule is said to be \textit{admissible} in a system if whenever its premises are derivable in the system, its conclusion is also derivable. We say a rule, scheme, or formula $X$ is \textit{equivalent} to $Y$ over a system if adding $X$ to the system makes $Y$ derivable or admissible, and vice versa.
	\end{dfn}

    	\begin{rem}
	\begin{itemize}
		\item The rule $(\mathsf{N})$ is equivalent to $(A \leftrightarrow B) \to (\neg A \leftrightarrow \neg B)$.
		\item The rule $(\mathsf{ECQ})$ is equivalent to $A \to (\neg A \to B)$.
		\item The rule $(\mathsf{An})$ is equivalent to $(A \to \neg A) \to \neg A$.
	\end{itemize}
        \end{rem}
        
    The following lemma is easy to prove.

	\begin{lem}\label{lem:an_equiv}
		In $\mathsf{LN}$, the Law of Non-Contradiction $\text{LNC} \equiv \neg(A \land \neg A)$ is equivalent to the rule $(\mathsf{An})$.
	\end{lem}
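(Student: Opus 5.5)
We prove the two directions separately, working with the sequent calculus $\mathsf{LN}$ of Definition~\ref{dfn:ln}. The precise claim is that the scheme $\neg(A\land\neg A)$ is derivable in $\mathsf{LN}+(\mathsf{An})$, and that $(\mathsf{An})$ is admissible in $\mathsf{LN}+\text{LNC}$.

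\emph{From $(\mathsf{An})$ to LNC.} We want $\vdash \neg(A\land\neg A)$. By $(\mathsf{An})$ applied to the formula $A\land\neg A$, it suffices to derive $A\land\neg A \vdash \neg(A\land\neg A)$. The only available rule introducing $\neg$ on the right is $(\mathsf{N})$, which needs $\neg C$ on the left with $C$ equivalent to $A\land\neg A$ in context. So we take $C=A$ and work under the hypotheses $A,\neg A$.
\begin{itemize}
\item $A\land\neg A, A \vdash A\land\neg A$ holds: from $A\land \neg A$ we obtain $\neg A$ by $(\land\mathsf{L})$ and weakening, and together with $A$ we get the conjunction by $(\land\mathsf{R})$.
\item $A\land\neg A, A\land\neg A\vdash A$ holds trivially.
\end{itemize}
Hence $(\mathsf{N})$ with context $\Gamma = A\land\neg A$ gives $\neg A, A\land\neg A\vdash\neg(A\land\neg A)$. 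Obtaining $\neg A$ from $A\land\neg A$ by $(\land\mathsf{L})$ and applying contraction yields $A\land\neg A\vdash \neg(A\land\neg A)$. Then $(\mathsf{An})$ gives $\vdash\neg(A\land\neg A)$.

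\emph{From LNC to $(\mathsf{An})$.} Suppose $\Gamma, A\vdash\neg A$. Then $\Gamma,A\vdash A\land\neg A$ by $(\land\mathsf{R})$, and $\Gamma, A\land\neg A\vdash A$ holds by $(\land\mathsf{L})$ and weakening. So $A$ and $A\land\neg A$ are interderivable in context $\Gamma$, and $(\mathsf{N})$ yields $\neg(A\land\neg A),\Gamma\vdash\neg A$. Cutting against the axiom $\vdash\neg(A\land\neg A)$ gives $\Gamma\vdash\neg A$, as required.

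The main obstacle is the first direction, since $\mathsf{LN}$ has no generic way of introducing a negation on the right. This is handled by using $(\mathsf{N})$ to transport the negation $\neg A$, which is already present in the antecedent, onto the equivalent formula $A\land\neg A$.
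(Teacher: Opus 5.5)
Your proof is correct: both directions go through as you describe, using $(\mathsf{N})$ to transfer negation between $A$ and $A\land\neg A$ (which are interderivable in the relevant context), followed by $(\mathsf{An})$ in one direction and a cut against the LNC axiom in the other. The paper gives no proof, calling the lemma easy, and your argument is the natural one it leaves implicit; the only simplification available is that the first premise $A, A\land\neg A\vdash A\land\neg A$ of $(\mathsf{N})$ follows directly from an initial sequent by weakening.
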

	
	\begin{dfn}[Arithmetical language and formulas]\label{dfn:arith_lang}
		The arithmetical language $\mathbf{NA}^{PR}$ consists of a constant $0$ (zero), function symbols for all primitive recursive functions, including the successor function $S$, and the equality symbol $=$, in addition to the logical connectives, quantifiers, and variables.
		
		Terms are constructed from variables, the constant $0$, and the function symbols. Atomic formulas are of the form $t_1 = t_2$, where $t_1$ and $t_2$ are terms.
		Formulas of arithmetic are constructed from atomic formulas, the logical connectives $\land, \lor, \to, \neg$, and the quantifiers $\forall, \exists$ as usual. 
	\end{dfn}
	
	\begin{dfn}[Arithmetic $\mathbf{NA}^{PR}$]\label{dfn:napr}
		The proof system $\mathbf{NA}^{PR}$ is obtained by adding the following initial sequents and inference rules to $\mathsf{LN}$.
		
		\noindent
		\underline{Arithmetic axioms (initial sequents)}:
		\begin{itemize}
			\item $\vdash\neg(0 = S(0))$
			\item Defining equations for all primitive recursive functions, such as:
			\begin{itemize}
				\item $\vdash\forall x (P(0) = 0)$, \quad $\vdash\forall x (P(S(x)) = x)$
				\item $\vdash\forall x (x + 0 = x)$, \quad $\vdash\forall x \forall y (x + S(y) = S(x + y))$
				\item $\vdash\forall x (x \cdot 0 = 0)$, \quad $\vdash\forall x \forall y (x \cdot S(y) = x \cdot y + x)$
				\item $\vdash\forall x (x \dotminus 0 = x)$, \quad $\vdash\forall x \forall y (x \dotminus S(y) = P(x \dotminus y))$
				\item \dots and so on.
			\end{itemize}
		\end{itemize}
		The relation $\le$ is defined as $s \le t := s \dotminus t = 0$, and the relation $<$ is defined as $s < t := S(s) \dotminus t = 0$.
		
		\noindent
		\underline{Initial sequent and Inference rules for equality}:
		
		\leftline{
			$\begin{array}{cc}
				\parbox{.45\linewidth}{
					\begin{prooftree}
						\AxiomC{}
						\RightLabel{$(\mathsf{Refl})$}
						\UnaryInfC{$\vdash t = t$}
				\end{prooftree}}
				&
				\parbox{.45\linewidth}{
					\begin{prooftree}
						\AxiomC{$\Gamma \vdash A(s)$}
						\RightLabel{$(\mathsf{Leibniz})$}
						\UnaryInfC{$s = t, \Gamma \vdash A(t)$}
				\end{prooftree}}
			\end{array}$}
		
		\noindent
		\underline{Inference rule for induction}:
		
		\leftline{
			$\begin{array}{c}
				\parbox{.9\linewidth}{
					\begin{prooftree}
						\AxiomC{$\Gamma \vdash A(0)$}
						\AxiomC{$A(a), \Gamma \vdash A(S(a))$}
						\RightLabel{$(\mathsf{Ind})$}
						\BinaryInfC{$\Gamma \vdash A(t)$}
				\end{prooftree}}
			\end{array}$}
		where the eigenvariable $a$ does not occur free in $\Gamma$.
		
		Furthermore, we define Minimal Arithmetic $\mathbf{MA}^{PR}$ as the system obtained by adding the inference rule $(\mathsf{An})$ to $\mathbf{NA}^{PR}$, and Heyting Arithmetic $\mathbf{HA}^{PR}$ as the system obtained by adding both $(\mathsf{An})$ and $(\mathsf{ECQ})$ to $\mathbf{NA}^{PR}$. We say a formula $A$ is derivable from a finite sequence of formulas $\Gamma$ in $\mathbf{NA}^{PR}$ and write $\Gamma \vdash_{\mathbf{NA}^{PR}} A$ if the sequent $\Gamma \vdash A$ is derivable in $\mathbf{NA}^{PR}$. If $\Gamma$ is empty, we simply write $\vdash_{\mathbf{NA}^{PR}} A$. The notations $\Gamma\vdash_{\mathbf{MA}^{PR}} A$ and $\Gamma\vdash_{\mathbf{HA}^{PR}} A$ are defined similarly for $\mathbf{MA}^{PR}$ and $\mathbf{HA}^{PR}$, respectively. 
	\end{dfn}
	
	\section{Basic Properties of \texorpdfstring{$\mathbf{NA}^{PR}$}{NA with PR}}\label{sec:basic-prop}
	
	Since the differences among $\mathbf{NA}^{PR}$, $\mathbf{MA}^{PR}$, and $\mathbf{HA}^{PR}$ lie only in the rules for negation, one might expect conservativity results for positive formulas. We show that $\mathbf{HA}^{PR}$ is indeed conservative over $\mathbf{NA}^{PR}$ for positive formulas. As we will see later in Section \ref{sec:sep-lang}, this result relies on the presence of primitive recursive function symbols and fails in the simpler language setting.

	\begin{prop}\label{prop:s0_0}
		$\vdash_{\mathbf{NA}^{PR}} S0=0\rightarrow A$ for any formula $A$.
	\end{prop}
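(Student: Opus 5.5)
Since $\mathbf{NA}^{PR}$ has no $(\mathsf{ECQ})$, we cannot use the axiom $\neg(0=S(0))$ to get $A$ from $S0=0$. Instead we exploit the primitive recursive function symbols to show that $S0=0$ collapses all numbers, and then argue by induction on the construction of $A$.

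\textbf{Step 1: all terms become equal.} Take the primitive recursive function $f$ with $f(0)=a$ and $f(S(x))=b$, for free variables $a,b$. Concretely, one can use $f(x)=a\cdot(1\dotminus x)+b\cdot(1\dotminus(1\dotminus x))$, or simply posit a definition-by-cases symbol with these defining equations. From $S0=0$ and $(\mathsf{Refl})$ $f(0)=f(0)$, $(\mathsf{Leibniz})$ gives $f(S0)=f(0)$. Rewriting with the defining equations, we obtain $S0=0\vdash b=a$, and by substitution $S0=0\vdash s=t$ for arbitrary terms $s,t$.

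Symmetry and transitivity of $=$ follow routinely from $(\mathsf{Refl})$ and $(\mathsf{Leibniz})$. It is useful to record that $\vdash s=t$ for all terms then holds in the form $S0=0\vdash \forall x\forall y\,(x=y)$, via $(\forall\mathsf{R})$ on the free variables $a,b$.

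\textbf{Step 2: induction on $A$.} We prove $S0=0,\Gamma'\vdash A$ for every formula $A$, where $\Gamma'$ may be empty; it suffices to show $S0=0\vdash A$.
\begin{itemize}
\item \emph{Atomic case.} This is Step 1.
\item \emph{$\land$, $\lor$, $\exists$, $\forall$.} These are immediate from the induction hypothesis: use $(\land\mathsf{R})$, $(\lor\mathsf{R}_1)$, $(\exists\mathsf{R})$ with any term, and $(\forall\mathsf{R})$. For $(\forall\mathsf{R})$ the eigenvariable condition holds because $S0=0$ is closed.
\item \emph{$\to$.} From $S0=0\vdash C$ we get $B, S0=0\vdash C$ by weakening, and then $(\to\mathsf{R})$.
\item \emph{$\neg B$.} This is the only delicate case, since negation has only the congruence rule $(\mathsf{N})$. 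The idea is to transport the axiom $\neg(0=S0)$. Under the hypothesis $S0=0$, we have $0=S0, S0=0\vdash B$ by the induction hypothesis plus weakening. Also $B, S0=0\vdash 0=S0$ by Step 1 plus weakening. Applying $(\mathsf{N})$ with $\Gamma=S0=0$ yields $\neg(0=S0), S0=0\vdash\neg B$. Cutting against the axiom $\vdash\neg(0=S(0))$ gives $S0=0\vdash \neg B$.
\end{itemize}

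Finally, $(\to\mathsf{R})$ gives $\vdash S0=0\to A$.

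The main obstacle is Step 1: ensuring that a suitable primitive recursive symbol is available, with the correct defining equations, so that $(\mathsf{Leibniz})$ can separate $0$ from $S0$. The paper's own remark that conservativity fails without primitive recursive symbols suggests that this step is exactly where those symbols are essential. The negation case only works because the induction hypothesis is available for all formulas, including atomic equalities, and because the axiom supplies one negated formula to transport via $(\mathsf{N})$.
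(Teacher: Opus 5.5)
Your proof is correct and has the same architecture as the paper's: induction on the number of connectives in $A$, with the negation case handled by applying $(\mathsf{N})$ to transport the axiom $\neg(0=S(0))$ to $\neg B$ under the hypothesis $S0=0$. Your instance of $(\mathsf{N})$ matches the axiom literally, whereas the paper's cuts against $\neg(S(0)=0)$ and so tacitly needs a symmetry step.

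The genuine difference is the atomic case. You collapse all terms using a definition-by-cases symbol $f$ with $f(0)=a$, $f(S(x))=b$ and one use of $(\mathsf{Leibniz})$, with no induction. The paper instead derives $S0=0\vdash a=0$ by $(\mathsf{Ind})$ on $a$: from $a=0$ it gets $S(a)=S(0)$ by $(\mathsf{Leibniz})$, then $S(a)=0$ using $S(0)=0$. It then obtains $t_1=t_2$ by symmetry and transitivity. Your route avoids induction; the paper's uses no function symbol other than $S$.

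The paper's route shows that your closing remark is mistaken. Step~1 is not where the primitive recursive symbols are essential; in fact the whole proposition goes through unchanged in $\mathbf{NA}$, which also proves $\neg(0=S(0))$. What those symbols supply in Section~\ref{sec:sep-lang} is the predecessor. It is needed to get $x=0$ from $S(x)=0$ in $\forall x(S(x)=0\to S(0)=0)$, and that formula is not literally an instance of this proposition.

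One small slip: under the hypothesis $S0=0$, $(\mathsf{Leibniz})$ rewrites $S0$ into $0$, not the other way round. So start from $\vdash f(S0)=f(S0)$ with $A(x):\equiv f(S0)=f(x)$ rather than from $f(0)=f(0)$, or apply symmetry first.
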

	\begin{proof}
		This can be shown almost same manner as in the proof in \cite[Sec 3.2.2]{troelstra1988constructivism}) by induction on the number of connectives in the formula.
		
		(Basis). First, we consider the base case for an atomic formula $P$. We can show that $\vdash_{\mathbf{NA}^{PR}} S0=0\rightarrow a=0$ by utilizing the equality rules and induction on $a$. 
		
		(Induction step). Next, we will show that $\vdash_{\mathbf{NA}^{PR}} S0=0\rightarrow A$, assuming $\vdash_{\mathbf{NA}^{PR}} S0=0\rightarrow B$ for any formula $B$ having fewer connectives than $A$ as induction hypothesis (IH).
		
		For the case $A=\neg B$, by the induction hypothesis, we have $S0=0\vdash B$. It is easy to derive B, $S(0)=0\vdash S(0)=0$ and $S(0)=0, S(0)=0\vdash B$. By applying ($\mathsf{N}$) to them, we obtain $\neg(S0=0) S0=0 \vdash \neg B$. This yields $\vdash S0=0 \to \neg B$ by applying $(\to\mathsf{R})$ and ($\mathsf{cut}$) with axiom $\vdash\neg (S(0)=0)$.

		Other cases can be proved in the same way as \cite[Sec 3.2.2]{troelstra1988constructivism}).
	\end{proof}

	\begin{prop}[Decidability of equality]\label{prop:decidability}
		$\vdash_{\mathbf{NA}^{PR}} \forall x \forall y (x = y \lor \neg(x = y))$ holds.
	\end{prop}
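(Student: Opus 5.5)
Because $(\mathsf{ECQ})$ is not available, the usual $\mathbf{HA}$ argument cannot be used as it stands: it concludes $\neg(a=b)$ from a contradiction. The plan is therefore to get every negated equation by applying the congruence rule $(\mathsf{N})$ to the single negative axiom $\vdash\neg(0=S0)$. The argument is a double induction, organised around the two facts
\[
\vdash\forall y\,(0=Sy\leftrightarrow 0=S0) \qquad\text{and}\qquad \vdash\forall x\forall y\,(Sx=Sy\leftrightarrow x=y).
\]

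First I show $\vdash 0=Sb\leftrightarrow 0=S0$. For the left-to-right direction I use the primitive recursive function $\mathrm{sg}$ with $\mathrm{sg}(0)=0$ and $\mathrm{sg}(Sx)=S0$. From $0=Sb$, $(\mathsf{Leibniz})$ applied to $\mathrm{sg}(0)=\mathrm{sg}(Sb)$ gives $0=S0$. The converse is symmetric. Applying $(\mathsf{N})$ to the two directions and cutting with $\vdash\neg(0=S0)$ gives $\vdash\neg(0=Sb)$, and similarly $\vdash\neg(Sb=0)$. Next, $Sa=Sb\to a=b$ follows by applying $P$ and $(\mathsf{Leibniz})$, and the converse follows by $(\mathsf{Leibniz})$. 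So $(\mathsf{N})$ turns $\neg(a=b)$ into $\neg(Sa=Sb)$.

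Then I prove $\forall y(a=y\lor\neg(a=y))$ by $(\mathsf{Ind})$ on $a$.
\begin{itemize}
\item Base case: inside it I run a second induction on $y$ for $0=y\lor\neg(0=y)$. The case $y=0$ is $(\mathsf{Refl})$, and the case $Sb$ is the right disjunct $\neg(0=Sb)$, which needs no inner hypothesis.
\item Step: assume $\forall y(a=y\lor\neg(a=y))$ and show $\forall y(Sa=y\lor\neg(Sa=y))$, again by induction on $y$. For $y=0$ we have $\neg(Sa=0)$. For $y=Sb$, instantiate the hypothesis at $b$. If $a=b$, then $Sa=Sb$ by $(\mathsf{Leibniz})$. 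If $\neg(a=b)$, then $\neg(Sa=Sb)$ by the $(\mathsf{N})$ step above. Combine with $(\lor\mathsf{L})$ and $(\lor\mathsf{R})$.
\end{itemize}
Universal generalisation finishes the proof.

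The main obstacle is the missing $(\mathsf{ECQ})$: each negative conclusion must come from $(\mathsf{N})$ applied to a provable biconditional with an already negated formula. I expect the delicate point to be showing $0=Sb\to 0=S0$ with a primitive recursive symbol, here $\mathrm{sg}$. This also suggests why the result might fail without primitive recursive function symbols. If needed, Proposition~\ref{prop:s0_0} can stand in for the direction $0=S0\to 0=Sb$.
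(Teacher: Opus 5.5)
Your proof is correct and follows essentially the same route as the paper: you derive $Sx=Sy\leftrightarrow x=y$ from $P$ and $(\mathsf{Leibniz})$, use $(\mathsf{N})$ to pass from $\neg(x=y)$ to $\neg(Sx=Sy)$, settle the case $x=0$ by an induction on $y$, and finish with an outer induction on $x$. You also spell out a step the paper's sketch leaves implicit, namely getting $\neg(0=Sb)$ and $\neg(Sb=0)$ from $\vdash\neg(0=S0)$ via $(\mathsf{N})$; note that the direction $0=S0\to 0=Sb$ does not follow from the $\mathrm{sg}$ trick by symmetry, so it should come from Proposition~\ref{prop:s0_0}, as you suggest.
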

	\begin{proof}
        This is proved by nested induction on $x$ and $y$. We first show $\forall x \forall y (x = y \leftrightarrow S(x) = S(y))$ by using Leibniz's law and the predecessor function $P$. Then, by applying the inference rule $(\mathsf{N})$ to this equivalence, we derive $\forall x \forall y (\neg(x = y) \leftrightarrow \neg(S(x) = S(y)))$. Subsequently, we prove $\forall y (0 = y \lor \neg(0 = y))$ by induction on $y$. Utilizing this result, the proof of $\forall x \forall y (x = y \lor \neg(x = y))$ can be completed by induction on $x$.
    \end{proof}
	
	\begin{thm}\label{thm:conserv}
		For any positive formula $P$, $\vdash_{\mathbf{HA}^{PR}} P \iff \vdash_{\mathbf{NA}^{PR}} P$.
	\end{thm}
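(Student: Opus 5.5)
The direction $\vdash_{\mathbf{NA}^{PR}} P \Rightarrow \vdash_{\mathbf{HA}^{PR}} P$ is immediate, since $\mathbf{HA}^{PR}$ extends $\mathbf{NA}^{PR}$ by the rules $(\mathsf{An})$ and $(\mathsf{ECQ})$. For the converse, the plan is a negation-eliminating translation. In the presence of primitive recursive function symbols, every negation can be replaced by an implication to the falsum-like atom $S0=0$. Define the translation $A \mapsto A^{\ast}$ as follows:
\begin{itemize}
\item it is the identity on atomic formulas;
\item it commutes with $\land,\lor,\to,\forall,\exists$;
\item it sets $(\neg A)^{\ast} := A^{\ast} \to S0=0$.
\end{itemize}
Clearly $P^{\ast}=P$ for positive $P$, and $A^{\ast}$ is always positive. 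So it suffices to show that $\Gamma \vdash_{\mathbf{HA}^{PR}} C$ implies $\Gamma^{\ast} \vdash_{\mathbf{NA}^{PR}} C^{\ast}$. I would in fact derive the stronger claim that the derivation of $\Gamma^{\ast}\vdash C^{\ast}$ never uses $(\mathsf{N})$, and in the arithmetical part uses the axiom $\neg(0=S0)$ only in a harmless way.

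The proof goes by induction on the $\mathbf{HA}^{PR}$-derivation. Structural rules, the rules for $\land,\lor,\to,\forall,\exists$, equality (Leibniz commutes with the translation, since it is defined compositionally and substitution commutes with $\ast$), and $(\mathsf{Ind})$ all translate to the same rules applied to the translated formulas.

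The negation rules are handled as follows.
\begin{itemize}
\item $(\mathsf{N})$: from $A^{\ast},\Gamma^{\ast}\vdash B^{\ast}$ and $B^{\ast},\Gamma^{\ast}\vdash A^{\ast}$ we get $A^{\ast}\to S0=0,\Gamma^{\ast}\vdash B^{\ast}\to S0=0$ by ordinary implication reasoning.
\item $(\mathsf{An})$: from $\Gamma^{\ast},A^{\ast}\vdash A^{\ast}\to S0=0$ we get $\Gamma^{\ast},A^{\ast},A^{\ast}\vdash S0=0$, contract, and apply $(\to\mathsf{R})$.
\item $(\mathsf{ECQ})$: from $\Gamma^{\ast}\vdash A^{\ast}$ we get $\Gamma^{\ast},A^{\ast}\to S0=0\vdash S0=0$, and then cut with Proposition \ref{prop:s0_0}, i.e.\ $S0=0\vdash B^{\ast}$, to obtain the conclusion.
\end{itemize}
This use of Proposition \ref{prop:s0_0} is where primitive recursive functions enter. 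Its proof relies on having enough function symbols so that $S0=0$ proves every atomic formula $t_1=t_2$, via the case $a=0$ and closure under terms.

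The remaining case is the axiom $\vdash\neg(0=S0)$. It translates to $\vdash 0=S0\to S0=0$, which is provable by symmetry of equality (Leibniz with $A(x):= x=0$ and $\mathsf{Refl}$). All other arithmetic axioms are positive and hence are fixed by the translation.

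The main obstacle I expect is bookkeeping rather than ideas. First, the translation has to commute with substitution, so that the quantifier rules, $(\mathsf{Leibniz})$ and $(\mathsf{Ind})$ go through; this requires checking that $(A[a/t])^{\ast}=A^{\ast}[a/t]$, which is routine by induction on formulas. Second, the target derivation must be genuinely in $\mathbf{NA}^{PR}$, which it is, since we only introduce positive formulas and use the rules of $\mathsf{LN}$ together with Proposition \ref{prop:s0_0}. Taking $\Gamma$ empty and $C=P$ positive then yields $\vdash_{\mathbf{NA}^{PR}} P$.
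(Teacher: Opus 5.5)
Your proposal is correct and is essentially the paper's proof. It uses the same translation $(\neg A)^{\ast} := A^{\ast}\to S0=0$ (the paper uses $0=S(0)$ as the target atom, which is immaterial), induction on $\mathbf{HA}^{PR}$-derivations, and the same handling of $(\mathsf{N})$, $(\mathsf{An})$, $(\mathsf{ECQ})$ and the axiom $\neg(0=S0)$, with you spelling out the cases the paper only calls easy.

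One side remark is off, though it does not affect the argument: the atomic case of Proposition \ref{prop:s0_0} uses only $(\mathsf{Ind})$ and $(\mathsf{Leibniz})$, not primitive recursive symbols. Those symbols matter because they let $\mathbf{NA}^{PR}$ get by with the single negative axiom $\neg(0=S0)$, whose translation is trivial, whereas the translation $\forall x(0=S(x)\to S0=0)$ of $\mathbf{NA}$'s axiom $\forall x\neg(0=S(x))$ is not provable in $\mathbf{NA}$ (Section \ref{sec:sep-lang}).
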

	\begin{proof}
		We define a translation $(\cdot)^*$ for any formula $F$ such that $(\neg A)^* \equiv A^* \to (0=S(0))$, leaving other connectives and quantifiers unchanged. For a sequence of formulas $\Gamma$, $\Gamma^*$ denotes the sequence of translated formulas.
		By induction on the length of the derivation in $\mathbf{HA}^{PR}$, we show that $\Gamma \vdash_{\mathbf{HA}^{PR}} F $ yields $ \Gamma^* \vdash_{\mathbf{NA}^{PR}} F^*$. 
		
The base cases for the initial sequents and arithmetical axioms are trivial. Likewise, the cases for the structural and logical rules for connectives and quantifiers are straightforwardly handled because the translation commutes with them. For the remaining rules involving negation, $(\mathsf{ECQ})$, and $(\mathsf{An})$, we can easily show that their translated versions are admissible in $\mathbf{NA}^{PR}$. 
	
		For a positive formula $P$, its translation is identical to itself ($P^* \equiv P$) because it contains no negation. Therefore, $\vdash_{\mathbf{HA}^{PR}} P \implies \vdash_{\mathbf{NA}^{PR}} P$. The converse direction holds since $\mathbf{NA}^{PR}$ is a subsystem of $\mathbf{HA}^{PR}$.
	\end{proof}

	\begin{cor}[Trichotomy]\label{cor:trichotomy}
		$\vdash_{\mathbf{NA}^{PR}} \forall x \forall y (x < y \lor x = y \lor y < x)$ holds.
	\end{cor}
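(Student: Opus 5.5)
Trichotomy is a positive formula: $<$ and $=$ are atomic equations $S(x)\dotminus y=0$ and $x=y$, and only $\lor$ and $\forall$ occur. By Theorem~\ref{thm:conserv} it therefore suffices to derive $\forall x\forall y(x<y\lor x=y\lor y<x)$ in $\mathbf{HA}^{PR}$. There the full intuitionistic machinery (including $(\mathsf{ECQ})$ and $(\mathsf{An})$) is available, and the statement is a standard theorem of Heyting arithmetic with primitive recursive symbols. So the whole proof reduces to checking that the usual $\mathbf{HA}$ argument goes through with the paper's definitions of $\le$ and $<$ via truncated subtraction.

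For the $\mathbf{HA}^{PR}$ derivation I would argue by induction on $x$, proving $\forall y(x<y\lor x=y\lor y<x)$.

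\emph{Base case $x=0$.} Do a case split on $y$, obtained from a trivial induction giving $y=0\lor\exists z(y=S(z))$. If $y=0$, the middle disjunct holds. If $y=S(z)$, then $0<S(z)$ unfolds to $S(0)\dotminus S(z)=0$. This follows from the defining equations, since $S(0)\dotminus S(z)=P(S(0)\dotminus z)$, together with a lemma $\forall z\,(S(0)\dotminus S(z)=0)$ proved by induction on $z$.

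\emph{Induction step.} Assume the trichotomy for $x$ and fix $y$.
\begin{itemize}
\item If $y<x$, then $y<S(x)$.
\item If $x=y$, then $y<S(x)$.
\item If $x<y$, I need $S(x)<y\lor S(x)=y$. Here decidability of equality (Proposition~\ref{prop:decidability}) gives $S(x)=y\lor\neg(S(x)=y)$. In the second case I must show $S(S(x))\dotminus y=0$ from $S(x)\dotminus y=0$ and $S(x)\neq y$.
\end{itemize}

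The main obstacle is supplying these elementary facts about $\dotminus$, namely:
\begin{itemize}
\item $a\le b\to a\le S(b)$;
\item $a\le b\to(a=b\lor S(a)\le b)$.
\end{itemize}
To avoid negation entirely, I would prove the second in the positive form $\forall b\,(a\dotminus b=0\to a=b\lor S(a)\dotminus b=0)$ by induction on $a$ with $b$ universally quantified. Each step uses the identity $S(a)\dotminus S(b)=a\dotminus b$, itself proved by induction on $b$ from the defining equations of $\dotminus$ and $P$, plus the case split $b=0\lor\exists z(b=S(z))$. The case $b=0$ with $S(a)\dotminus 0=0$ gives $S(a)=0$, and is discharged by Proposition~\ref{prop:s0_0}-style reasoning (in $\mathbf{HA}^{PR}$ simply by $(\mathsf{ECQ})$ against $\neg(0=S(0))$).

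Once these lemmas are in place, the induction step assembles by $(\lor\mathsf{L})$. Conservativity (Theorem~\ref{thm:conserv}) then transfers the result to $\mathbf{NA}^{PR}$.
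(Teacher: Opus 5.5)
Your proposal is correct and follows the paper's route: trichotomy is positive, so Theorem~\ref{thm:conserv} reduces it to $\mathbf{HA}^{PR}$, where the paper simply cites the standard fact whose derivation you go on to sketch. One small fix in that sketch is needed in the case $b=0$. There you obtain $S(a)=0$ rather than $0=S(0)$, which is the form the only negative axiom requires. So first apply $P$ and $(\mathsf{Leibniz})$ to get $a=0$ and hence $S(0)=0$; Proposition~\ref{prop:s0_0} then finishes it even inside $\mathbf{NA}^{PR}$, so the positive-lemma version of your argument does not actually need conservativity.
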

	\begin{proof}
		By Theorem \ref{thm:conserv} and $\vdash_{\mathbf{HA}^{PR}} \forall x \forall y (x < y \lor x = y \lor y < x)$, the formula is derivable.
	\end{proof}

	\section{Equivalences between Logical and Arithmetical Principles over \texorpdfstring{$\mathbf{NA}^{PR}$}{NA with PR}}\label{sec:equiv}

    In this section, we investigate the relationship between logical rules and arithmetical principles over $\mathbf{NA}^{PR}$. Specifically, we show that the rule $(\mathsf{ECQ})$ is equivalent to the definability of disjunction, and we introduce the Non-contradictory Bounded Least Number Principle ($\text{NBP}$) to characterize the rule $(\mathsf{An})$.

	\begin{lem}\label{lem:ecq_equiv}
		In $\mathbf{NA}^{PR}$, $(\mathsf{ECQ})$ is equivalent to $\neg(0=0)\to 0=S0$.
	\end{lem}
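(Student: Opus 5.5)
The forward direction is immediate. From $0=0\vdash 0=0$, the rule $(\mathsf{ECQ})$ gives $0=0,\neg(0=0)\vdash 0=S0$. Cutting against $(\mathsf{Refl})$ $\vdash 0=0$ yields $\neg(0=0)\vdash 0=S0$, and $(\to\mathsf{R})$ gives $\vdash\neg(0=0)\to 0=S0$.

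For the converse, assume $\vdash\neg(0=0)\to 0=S0$ is available as an axiom. Let $\Gamma\vdash A$ be derivable; we must derive $\Gamma,\neg A\vdash B$. The plan is to reduce $\neg A$ to $\neg(0=0)$ using the congruence rule $(\mathsf{N})$, then reach $0=S0$, and then use Proposition~\ref{prop:s0_0} to obtain any $B$.

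The key steps, in order, are as follows.

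\begin{enumerate}
\item Build the premises of $(\mathsf{N})$. From $\Gamma\vdash A$, weakening gives $0=0,\Gamma\vdash A$. From $\vdash 0=0$, weakening gives $A,\Gamma\vdash 0=0$.
\item Apply $(\mathsf{N})$ with the two formulas $A$ and $0=0$ to get $\neg A,\Gamma\vdash\neg(0=0)$.
\item Cut this against the assumed axiom, via $(\to\mathsf{L})$ on $\neg(0=0)\to 0=S0$, to obtain $\neg A,\Gamma\vdash 0=S0$.
\item Obtain $0=S0\vdash S0=0$ by symmetry of equality: apply $(\mathsf{Leibniz})$ to $\vdash 0=0$ with the formula $x=0$, which gives $0=S0\vdash S0=0$.
\item Proposition~\ref{prop:s0_0} gives $\vdash S0=0\to B$, hence $S0=0\vdash B$. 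Chaining the cuts and applying exchange yields $\Gamma,\neg A\vdash B$.
\end{enumerate}

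This shows that $(\mathsf{ECQ})$ is admissible, and in fact derivable as a rule, over $\mathbf{NA}^{PR}$ plus the axiom.

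I expect no serious obstacle. The only points needing care are the orientation of the equality, which is handled by the Leibniz step for symmetry, and the fact that $(\mathsf{N})$ requires identical contexts in both premises, which is handled by weakening so that both premises share $\Gamma$.
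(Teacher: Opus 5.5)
Your proposal is correct and follows the paper's proof essentially step for step. In both, you weaken to set up $(\mathsf{N})$ with $A$ and $0=0$, pass through the assumed implication to $0=S0$, and finish with Proposition~\ref{prop:s0_0}. Your explicit $(\mathsf{Leibniz})$ step turning $0=S0$ into $S0=0$ fills a small orientation gap the paper skips, since it cites Proposition~\ref{prop:s0_0} (stated for $S0=0$) directly as $0=S0\vdash B$.
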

	\begin{proof}
		It is easy to see that $\neg(0=0) \to 0=S0$ is derivable using $\mathsf{ECQ}$.
		
		Conversely, assume $\neg(0=0) \to 0=S0$. We derive the rule $\mathsf{ECQ}$. Suppose we have a derivation of $\Gamma \vdash A$.
		First, from the premise $\Gamma \vdash A$ and $\vdash 0=0$, we obtain $0=0, \Gamma \vdash A$ and $\Gamma, A \vdash 0=0$ using weakening. Then, by applying the rule $(\mathsf{N})$, we can easily derive $\Gamma, \neg A \vdash \neg(0=0)$.
		Then, using the assumption, we obtain $\Gamma, \neg A \vdash 0=S0$. By Proposition \ref{prop:s0_0}, we know that $0=S0 \vdash_{\mathbf{NA}^{PR}} B$ for any formula $B$. By applying $(\mathsf{cut})$ with $\Gamma, \neg A\vdash 0=S0$ $0=S0 \vdash B$, we obtain $\Gamma, \neg A \vdash B$.
	\end{proof}

    We can show the following lemma easily.

    \begin{lem}
        $\vdash_{\mathbf{NA}^{PR}}\exists x[(x=0\rightarrow A)\land(\neg x=0\rightarrow B)]\rightarrow(A\lor B)$.
    \end{lem}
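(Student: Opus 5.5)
The plan is to work directly in $\mathsf{LN}$ plus the arithmetic rules, using the decidability of equality (Proposition \ref{prop:decidability}) as the only nontrivial arithmetical input. By $(\to\mathsf{R})$ and $(\exists\mathsf{L})$ with a fresh eigenvariable $a$, and then $(\land\mathsf{L})$, it suffices to derive the sequent
\[
a=0\rightarrow A,\ \neg(a=0)\rightarrow B \vdash A\lor B .
\]
Here $a$ does not occur in $A$ or $B$, since $x$ is bound in the existential and $A,B$ are fixed formulas without $x$. I am reading the statement with $\neg x=0$ as $\neg(x=0)$.

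Next, instantiate Proposition \ref{prop:decidability} with $(\forall\mathsf{L})$ twice, taking $x:=a$ and $y:=0$. This gives $\vdash a=0\lor\neg(a=0)$. Cutting this against the goal, it is enough to derive
\[
a=0\lor\neg(a=0),\ a=0\rightarrow A,\ \neg(a=0)\rightarrow B\vdash A\lor B ,
\]
and we do this by $(\lor\mathsf{L})$, splitting into two branches.

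In the first branch we have $a=0$ as a hypothesis. Apply $(\to\mathsf{L})$ to $a=0\rightarrow A$, using the initial sequent $a=0\vdash a=0$ on the left and $A\vdash A\lor B$ (by $(\lor\mathsf{R}_1)$) on the right. Weakening with $\neg(a=0)\rightarrow B$ and contracting closes the branch.

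The second branch is symmetric. From the hypothesis $\neg(a=0)$, apply $(\to\mathsf{L})$ to $\neg(a=0)\rightarrow B$, using $\neg(a=0)\vdash\neg(a=0)$ and $B\vdash A\lor B$ (by $(\lor\mathsf{R}_2)$). Then weaken and contract.

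No negation rule other than what is already packed into Proposition \ref{prop:decidability} is needed. In particular, neither $(\mathsf{ECQ})$ nor $(\mathsf{An})$ is used, so the derivation lives in $\mathbf{NA}^{PR}$.

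The only point requiring care is the eigenvariable bookkeeping in $(\exists\mathsf{L})$. After the cut with the instance of decidability, the context contains only formulas in $a$ and the conclusion $A\lor B$ is $a$-free, so the condition is met. For this reason I will perform the cut and the $(\lor\mathsf{L})$ split before applying $(\exists\mathsf{L})$. I expect no real obstacle: the substantive work is already Proposition \ref{prop:decidability}.
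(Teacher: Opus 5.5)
Your proof is correct and matches the intended argument. The paper gives no proof beyond saying the lemma is easy, but it uses exactly this pattern in the proof of Proposition~\ref{prop:ecq_disj}: decidability of $a=0$ from Proposition~\ref{prop:decidability}, then $(\lor\mathsf{L})$, $(\to\mathsf{L})$, $(\mathsf{cut})$ and $(\exists\mathsf{L})$, with no use of $(\mathsf{An})$ or $(\mathsf{ECQ})$; your eigenvariable handling is fine, and each branch needs only weakening and exchange, not contraction.
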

	
	\begin{prop}\label{prop:ecq_disj}
		Over $\mathbf{NA}^{PR}$, $(\mathsf{ECQ})$ is equivalent to $(A\lor B)\rightarrow\exists x[(x=0\rightarrow A)\land(\neg x=0\rightarrow B)]$.

	\end{prop}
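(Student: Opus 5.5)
We prove the two directions separately, using Lemma \ref{lem:ecq_equiv} to reduce $(\mathsf{ECQ})$ to the single formula $\neg(0=0)\to 0=S0$.

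\emph{From $(\mathsf{ECQ})$ to the scheme.} Work in $\mathbf{NA}^{PR}+(\mathsf{ECQ})$ and argue by $(\lor\mathsf{L})$ on $A\lor B$.

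If $A$ holds, take the witness $x:=0$. Then $0=0\to A$ follows from $A$ by weakening. Also $\neg(0=0)\to B$ follows from $(\mathsf{ECQ})$ applied to $\vdash 0=0$, which gives $\neg(0=0)\vdash B$.

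If $B$ holds, take the witness $x:=S0$. Then $\neg(S0=0)\to B$ follows from $B$ by weakening. For $S0=0\to A$ we use Proposition \ref{prop:s0_0}. (Alternatively, symmetry of equality and the axiom $\neg(0=S0)$ together with $(\mathsf{ECQ})$ give it.) Finally $(\land\mathsf{R})$ and $(\exists\mathsf{R})$ finish each case. This direction is routine.

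\emph{From the scheme to $(\mathsf{ECQ})$.} By Lemma \ref{lem:ecq_equiv}, it suffices to derive $\neg(0=0)\to 0=S0$ in $\mathbf{NA}^{PR}$ plus the scheme. Instantiate the scheme with $A:=0=0$ and $B:=0=S0$. Since $\vdash 0=0$, we get $0=0\lor 0=S0$, and hence $\exists x[(x=0\to 0=0)\land(\neg x=0\to 0=S0)]$.

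Now assume $\neg(0=0)$ and fix an eigenvariable witness $a$ with $\neg a=0\to 0=S0$. We aim for $0=S0$, and split on $a$ by Proposition \ref{prop:decidability}: $a=0\lor\neg(a=0)$.

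In the case $\neg(a=0)$, we get $0=S0$ immediately from the witness.

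In the case $a=0$, we need $\neg(a=0)$ from $\neg(0=0)$. This is exactly a Leibniz step: $(\mathsf{Leibniz})$ applied to the formula $\neg(0=x)$ turns $0=a$ and $\neg(0=0)$ into $\neg(0=a)$. Equivalently, apply $(\mathsf{N})$ to $a=0\vdash 0=0$ and $a=0,0=0\vdash a=0$, giving $\neg(0=0),a=0\vdash\neg(a=0)$. Then the witness yields $0=S0$.

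Combining the cases by $(\lor\mathsf{L})$, then $(\exists\mathsf{L})$ and $(\to\mathsf{R})$, gives $\neg(0=0)\to 0=S0$.

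The main obstacle I expect is the negation handling in the converse: we cannot use $(\mathsf{ECQ})$ or $(\mathsf{An})$, so every move with $\neg$ must go through $(\mathsf{N})$ or Leibniz. The choice $B:=0=S0$, together with the case split from decidability of equality, is what makes this work. A cleaner alternative is to instantiate $A$ with a formula that carries information about $\neg(0=0)$, but the decidability route above seems sufficient.
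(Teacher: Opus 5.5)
Your proof is correct and follows the paper's argument essentially step for step: instantiate the scheme with $A:=0=0$ and a false equation for $B$, case-split on the eigenvariable witness via decidability of equality (Proposition~\ref{prop:decidability}), and handle the forward direction by the routine witnesses $x:=0$ and $x:=S0$. One harmless slip: Leibniz applied to $\neg(0=x)$ gives $\neg(0=a)$ from $0=a$, not the $\neg(a=0)$ from $a=0$ you need, but your alternative derivation of $\neg(0=0),a=0\vdash\neg(a=0)$ by $(\mathsf{N})$ is exactly right and suffices.
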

	\begin{proof}
		Since the implication from the first clause to the second clause is easy to show, we show that the implication implies $(\mathsf{ECQ})$. Assume the implication, by Lemma \ref{lem:an_equiv}, it suffices to show $\neg0=0\rightarrow S0=0$. Since $\vdash_{\mathbf{NA}^{PR}} 0=0\lor S0=0$, by the implication, it suffices to show $\vdash_{\mathbf{NA}^{PR}} \exists x[(x=0\rightarrow 0=0)\land(\neg x=0\rightarrow S0=0)]\rightarrow \neg0=0\rightarrow S0=0\cdots (*)$.
		
		\noindent
		It is easy to see that the following sequents are derivable in $\mathbf{NA}^\text{PR}$:
		\begin{itemize}
			\item $\exists x((x=0\rightarrow0=0)\land(\neg x=0\rightarrow S0=0))\vdash \exists x(\neg x=0\rightarrow S0=0)$
			\item $a=0,\neg a=0\rightarrow S0=0\vdash\neg0=0\rightarrow S0=0$
			\item $\neg a=0,\neg a=0\rightarrow S0=0\vdash\neg 0=0\rightarrow S0=0$
		\end{itemize}
		Using these sequents and the decidability of equality $\vdash a=0\lor\neg a=0$ (Proposition \ref{prop:decidability}), we can easily derive $(*)$ by applying logical rules such as $(\lor\mathsf{L})$, $(\exists\mathsf{L})$, and $(\mathsf{cut})$.
	\end{proof}

	The definability of disjunction $(A\lor B)\leftrightarrow\exists x[(x=0\rightarrow A)\land(\neg(x=0)\rightarrow B)]$ is typically used to derive the disjunction property from the numerical existence property. However, as we will show in the subsequent sections, $\mathbf{NA}^{PR}$ is strictly weaker than $\mathbf{MA}^{PR}$, and $\mathbf{MA}^{PR}$ is strictly weaker than $\mathbf{HA}^{PR}$. Because of this weakness, this standard method cannot be directly applied in these weaker systems. Nevertheless, the following equivalence holds.

	\begin{thm}[Definability of Disjunction in $\mathbf{NA}^{PR}$]\label{thm:def_disj}
		$\vdash_{\mathbf{NA}^{PR}} A\lor B\leftrightarrow \exists x \big( (x=0 {\to} A) \land (0 < x {\to} B) \big)$ holds.
	\end{thm}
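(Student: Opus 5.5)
We prove the two directions separately. The point of the statement is that the second disjunct is now guarded by the positive formula $0<x$, that is $S(0)\dotminus x=0$, instead of $\neg(x=0)$. Consequently no negation rule beyond what $\mathsf{LN}$ provides is needed.

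\emph{Left to right.} We use $(\lor\mathsf{L})$ and split into two cases.
\begin{itemize}
\item From $A$ we take the witness $x:=0$. The conjunct $0=0\to A$ follows by weakening. For $0<0\to B$ we note that $0<0$ is the atomic formula $S(0)\dotminus 0=0$. By the defining equation for $\dotminus$ this is provably equal to $S(0)=0$, and Proposition \ref{prop:s0_0} then gives $S(0)=0\to B$. So $A\vdash 0<0\to B$ is derivable, using Leibniz to rewrite $S(0)\dotminus 0$ to $S(0)$.
\item From $B$ we take the witness $x:=S(0)$. Then $0<S(0)\to B$ is immediate by weakening. For $S(0)=0\to A$ we again use Proposition \ref{prop:s0_0}.
\end{itemize}
Applying $(\exists\mathsf{R})$ and $(\land\mathsf{R})$ in each case completes this direction.

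\emph{Right to left.} By $(\exists\mathsf{L})$ it suffices to derive
\[
(a=0\to A)\land(0<a\to B)\vdash A\lor B
\]
with $a$ fresh. The key positive lemma is $\vdash_{\mathbf{NA}^{PR}} \forall x\,(x=0\lor 0<x)$. We prove it by induction on $x$ with $(\mathsf{Ind})$.
\begin{itemize}
\item The base case $0=0$ holds by $(\mathsf{Refl})$.
\item In the step, the successor case yields $0<S(a)$. This unfolds to $S(0)\dotminus S(a)=0$, and we must show that it holds. By the defining equations, $S(0)\dotminus S(a)=P(S(0)\dotminus a)$, and $S(0)\dotminus a\le S(0)$ is not directly available.
\end{itemize}
Rather than fight the computation, we invoke Theorem \ref{thm:conserv}: the formula $\forall x(x=0\lor 0<x)$ is positive and provable in $\mathbf{HA}^{PR}$, hence provable in $\mathbf{NA}^{PR}$. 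The same move is used for Corollary \ref{cor:trichotomy}; indeed this is just trichotomy instantiated with $0$, since $x<0$ is refutable positively. Alternatively we can derive it from Corollary \ref{cor:trichotomy} directly. The case $a<0$, that is $S(a)\dotminus 0=0$, gives $S(a)=0$, and from this we obtain $S(0)=0$: prove $S(b)=0\to S(0)=0$ by a short induction, or use Leibniz with the predecessor. Proposition \ref{prop:s0_0} then yields anything.

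With $a=0\lor 0<a$ in hand, we apply $(\lor\mathsf{L})$ and $(\mathsf{cut})$.
\begin{itemize}
\item In case $a=0$, the hypothesis $a=0\to A$ gives $A$, hence $A\lor B$.
\item In case $0<a$, the hypothesis $0<a\to B$ gives $B$, hence $A\lor B$.
\end{itemize}
Both are routine uses of $(\to\mathsf{L})$, $(\land\mathsf{L})$ and $(\lor\mathsf{R})$.

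The main obstacle is the positive dichotomy $x=0\lor 0<x$, which must be obtained without decidability-style negation reasoning. We plan to discharge it via the conservativity theorem, Theorem \ref{thm:conserv}, and the rest is purely structural.
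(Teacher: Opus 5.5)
Your proposal is correct and takes essentially the same route as the paper, which only points to the Troelstra--van Dalen argument; your proof is that argument adapted to $\mathbf{NA}^{PR}$, with witnesses $0$ and $S(0)$, Proposition \ref{prop:s0_0} in place of ex falso, and the positive dichotomy $x=0\lor 0<x$ in place of decidability of $x=0$. One minor remark: the dichotomy also has a direct positive proof, by showing $S(0)\dotminus S(y)=0$ by induction on $y$ using $P(S(0))=0$ and $P(0)=0$, so the detour through Theorem \ref{thm:conserv} is not actually needed.
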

	\begin{proof}
		The proof method for this equivalence is almost identical to the method in \cite[Sec 3.3.2]{troelstra1988constructivism}). 
	\end{proof}

	As discussed in the introduction, we aim to characterize logical rules using arithmetical axioms. Our Non-contradictory Bounded Least Number Principle ($\text{NBP}$), which is provides such a characterization over subminimal systems, inspired by the classical Least Number Principle. In a subminimal setting where negation is weak, establishing the absolute non-existence of smaller witnesses can be difficult. $\text{NBP}$ asserts that it is contradictory that we have a witness $z$ for $A$, but there is not the least witness of $A$ less than or equal to $z$. Intuitively, it captures a constructive search for the smallest number satisfying $A$ in a way that avoids direct negation elimination.
	
	\begin{dfn}[Non-contradictory Bounded Least Number Principle $\text{NBP}$]\label{dfn:nbp}
		The schema $\text{NBP}$ is defined as follows: for any formula $A$
		\begin{equation*}
			 \forall z \neg \Big( A(z) \land \forall y \big(y \le z \to \neg(A(y) \land \forall u(u < y \to \neg A(u)))\big) \Big)
		\end{equation*}
	\end{dfn}

	\begin{thm}\label{thm:nbp_an}
		Over $\mathbf{NA}^{PR}$, $\text{NBP}$ is equivalent to $(\mathsf{An})$.
	\end{thm}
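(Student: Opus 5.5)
The plan is to prove the two directions separately. For $\text{NBP}\Rightarrow(\mathsf{An})$ we take a trivial instance of $\text{NBP}$ and reduce it to $\text{LNC}$, which suffices by Lemma \ref{lem:an_equiv}. For $(\mathsf{An})\Rightarrow\text{NBP}$ we first show that in $\mathbf{MA}^{PR}$ negation behaves like implication into a fixed formula. We then run the usual course-of-values induction argument.

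\emph{From $\text{NBP}$ to $(\mathsf{An})$.} Fix a formula $C$ and take the instance of $\text{NBP}$ with $A(x):\equiv C$, where $x$ does not occur in $C$, and with $z:=0$. This gives $\vdash\neg\big(C\land\forall y(y\le 0\to\neg(C\land\forall u(u<y\to\neg C)))\big)$. Using $(\mathsf{N})$ twice, it suffices to prove two positive-style equivalences in $\mathbf{NA}^{PR}$.
\begin{itemize}
\item First, $C\land\forall u(u<0\to\neg C)\leftrightarrow C$. Here $u<0$ is $S(u)\dotminus 0=0$, i.e.\ $S(u)=0$. From this we get $S0=0$ by induction on $u$ with the predecessor function, and hence anything by Proposition \ref{prop:s0_0}. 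So the universal conjunct is provable outright.
\item Second, $\forall y(y\le 0\to X(y))\leftrightarrow X(0)$. Here we use $y\le 0\leftrightarrow y=0$ and $(\mathsf{Leibniz})$.
\end{itemize}
Applying $(\mathsf{N})$ to the first equivalence gives $\neg(C\land\forall u(u<y\to\neg C))\leftrightarrow\neg C$. Combined with the second equivalence, the whole conjunction inside the instance is equivalent to $C\land\neg C$. One more application of $(\mathsf{N})$ then yields $\neg(C\land\neg C)$, which is $\text{LNC}$, and hence $(\mathsf{An})$ by Lemma \ref{lem:an_equiv}.

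\emph{From $(\mathsf{An})$ to $\text{NBP}$.} Let $f:\equiv\neg(0=0)$. We first show in $\mathbf{MA}^{PR}$ that $\neg D\leftrightarrow(D\to f)$ for every $D$.
\begin{itemize}
\item Under the assumption $D$ we have $D\leftrightarrow 0=0$, so $(\mathsf{N})$ gives $\neg D\leftrightarrow f$ under $D$. This yields $\neg D\to(D\to f)$.
\item Conversely, from $D\to f$ and $D$ we get $f$, and then $\neg D$ by the same equivalence. So $D\to\neg D$ holds, and $(\mathsf{An})$ yields $\neg D$.
\end{itemize}
With this, negations can be handled like intuitionistic negations relative to the fixed falsum $f$.

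Next, for fixed $A$, we prove by induction on $z$ the formula
\[
B(z):\equiv\forall y\big(y\le z\to\neg(A(y)\land\forall u(u<y\to\neg A(u)))\big)\to\forall y(y\le z\to\neg A(y)).
\]
\begin{itemize}
\item The base case uses $y\le 0\leftrightarrow y=0$ and the vacuity of $u<0$, exactly as in the first direction.
\item For the step, the hypothesis for $S(z)$ implies the hypothesis for $z$, so by $B(z)$ we get $\forall y\le z\,\neg A(y)$, i.e.\ $\forall u(u<S(z)\to\neg A(u))$. The hypothesis instantiated at $y=S(z)$ then gives $\neg(A(S(z))\land\forall u<S(z)\,\neg A(u))$. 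Translated into the form $\to f$, this yields $A(S(z))\to f$, i.e.\ $\neg A(S(z))$.
\item Combining the two, we use the arithmetical facts $y\le S(z)\leftrightarrow(y\le z\lor y=S(z))$ and $u<S(z)\leftrightarrow u\le z$. These are positive formulas provable in $\mathbf{HA}^{PR}$, hence in $\mathbf{NA}^{PR}$ by Theorem \ref{thm:conserv}.
\end{itemize}
Finally, from $A(z)$ and the antecedent of $B(z)$ we obtain $\neg A(z)$, hence $f$. Therefore the conjunction in $\text{NBP}$ implies $f$, which by the equivalence above is its negation; generalizing over $z$ gives $\text{NBP}$.

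The main obstacle is the bookkeeping in this second direction. There the translation $\neg D\leftrightarrow(D\to f)$ must be applied under contexts, inside the induction formula and under bounded quantifiers. Each such move must be justified only by $(\mathsf{N})$ and $(\mathsf{An})$, and the induction rule must be used with an appropriate context $\Gamma$.
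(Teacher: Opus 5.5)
Your proposal is correct and follows the paper's route. The $z=0$ instance of $\text{NBP}$ collapses, via $(\mathsf{N})$ and the vacuity of $u<0$, to $\neg(C\land\neg C)$. Conversely, an induction showing that the $\text{NBP}$ antecedent forces $\neg A$ at every point up to the bound yields $\text{NBP}$ in $\mathbf{MA}^{PR}$.

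Your strengthened induction formula $B(z)$ and the auxiliary equivalence $\neg D\leftrightarrow(D\to\neg(0=0))$ in $\mathbf{MA}^{PR}$ fill in details the paper's sketch leaves implicit: it only states "induction to show $C_A(a)\vdash\neg A(a)$" followed by a final appeal to $(\mathsf{An})$ or $\text{LNC}$.
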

	\begin{proof}
		\textit{(Sketch)} By Lemma \ref{lem:an_equiv}, $(\mathsf{An})$ is equivalent to $\text{LNC} \equiv \neg(A \land \neg A)$. We show the equivalence between $\text{NBP}$ and $\text{LNC}$. If $A$ is an open formula $A(a)$ with a free variable $a$, we instantiate $\text{NBP}$ for $A(a)$ by treating $a$ as a parameter. We use the following abbreviations for the subformulas of $\text{NBP}[A(a)]$:
		
    \begin{equation*}
        B_A(a) \equiv A(a) \land \forall x(x < z \to \neg A(a))\quad C_A(b) \equiv \forall y(y \le b \to \neg B_A(y))
    \end{equation*}

		The core of the proof lies in the equivalence $A(a) \leftrightarrow C_A(0)$. Using the fact that $x < 0$ is vacuously false, we can derive $\neg B_A(0) \leftrightarrow \neg A(a)$, and subsequently $A(a) \leftrightarrow C_A(0)$. From this, it follows that $A(a) \land \neg A(a)$ is equivalent to $A(a) \land C_A(0)$ (an instance of $\text{NBP}$ at $z=0$), which establishes ($\text{NBP}\implies \text{LNC}$). For the latter half ($\text{LNC} \implies \text{NBP}$), we use induction to show $C_A(a) \vdash \neg A(a)$, and then obtain the desired result.
	\end{proof}

	Since $\mathbf{MA}^{PR}$ is defined by adding the rule $(\mathsf{An})$ to $\mathbf{NA}^{PR}$, Theorem \ref{thm:nbp_an} implies that $\text{NBP}$ characterizes $\mathbf{MA}^{PR}$ over $\mathbf{NA}^{PR}$.

\section{Separation of \texorpdfstring{$\mathbf{NA}^{PR}$}{NA with PR} and \texorpdfstring{$\mathbf{MA}^{PR}$}{MA with PR} using Models}\label{sec:sep-na-ma}

This section introduces constant domain neighbourhood models for arithmetic. Using this semantics, we construct a specific countermodel to demonstrate that the rule $(\mathsf{An})$ is not derivable, thereby strictly separating $\mathbf{NA}^{PR}$ and $\mathbf{MA}^{PR}$.

\begin{dfn}[Constant Domain neighbourhood Arithmetic Models]
    Let $\mathcal{U}(W)$ be the set of upward-closed sets of worlds. A constant Domain neighbourhood model for arithmetic is a structure $\mathcal{M} = \langle W, \leq, D, I, N \rangle$ satisfying the following conditions:
    \begin{itemize}
        \item \textbf{Worlds and Relation}: $W$ is a non-empty set of worlds, and $\leq$ is a partial order on $W$.
        \item \textbf{Domain}: $D = \mathbb{N}$ is the constant domain for all $w \in W$, where $\mathbb{N}$ is the set of the natural numbers.
        \item \textbf{Interpretation}: $I$ is an interpretation function. Constants, function symbols, and the equality symbol $=$ are interpreted strictly as standard arithmetic operations and relations on $\mathbb{N}$ in every world $w \in W$.
        \item \textbf{Negation Function}: The negation function $N: \mathcal{U}(W) \to \mathcal{U}(W)$ is a function satisfying the following condition:
        \begin{equation}\tag{P1}\label{eq:p1}
            w \in N(U) \iff w \in N(U \cap R(w))
        \end{equation}
        for any $w \in W$ and $U \in \mathcal{U}(W)$, where $R(w) = \{v \in W \mid w \leq v\}$.
    \end{itemize}
\end{dfn}

\begin{dfn}[Forcing Relation $\Vdash$]\label{dfn:forcing}
    The forcing relation $w \Vdash \phi$ for a world $w \in W$ and a formula $\phi$ (with parameters from $\mathbb{N}$) is defined inductively as follows:
    \begin{itemize}
        \item $w \Vdash t_1 = t_2$ iff $\mathbb{N} \models t_1 = t_2$, where $\mathbb{N}\models$ means the classical validity in the standard model $\mathbb{N}$. 
        \item $w \Vdash A \land B$ iff $w \Vdash A$ and $w \Vdash B$.
        \item $w \Vdash A \lor B$ iff $w \Vdash A$ or $w \Vdash B$.
        \item $w \Vdash A \to B$ iff for all $w' \geq w$, $w' \Vdash A \implies w' \Vdash B$.
        \item $w \Vdash \exists x A(x)$ iff there exists $n \in \mathbb{N}$ such that $w \Vdash A(n)$.
        \item $w \Vdash \forall x A(x)$ iff for all $w' \geq w$ and all $n \in \mathbb{N}$, $w' \Vdash A(n)$.
        \item $w \Vdash \neg A$ iff $w \in N(\{ v \in W \mid v \Vdash A \})$.
    \end{itemize}
\end{dfn}
\noindent
As is standard for Kripke semantics, the monotonicity property holds: if $w \Vdash A$ and $w \leq w'$, then $w' \Vdash A$. Consequently, the truth set $V(A) = \{ v \in W \mid v \Vdash A \}$ is always an upward-closed set, i.e., $V(A) \in \mathcal{U}(W)$.

In general, it is an open question whether the class of all constant domain neighbourhood models provides a complete semantics for the full system $\mathbf{NA}^{PR}$, or even if every such model validates all axioms of $\mathbf{NA}^{PR}$. However, we can construct a specific model $\mathcal{M}$ with finite worlds that we can strictly verify to be a model of a subsystem $\mathbf{NA}^{PR}$, and use it to separate the logical rules.

\begin{thm}[Unprovability of $(\mathsf{An})$]
$\not\vdash_{\mathbf{NA}^{PR}} (A \to \neg A) \to \neg A$.
\end{thm}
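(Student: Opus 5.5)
The plan is to build one finite constant domain neighbourhood model $\mathcal{M}$ that is sound for every rule of $\mathbf{NA}^{PR}$ (in the sense that derivable sequents are forced at every world) but in which some instance of $(A\to\neg A)\to\neg A$ fails at some world.

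The key point is that every arithmetic formula is forced either everywhere or nowhere, except for the effect of $N$. So I will choose $N$ so that $\neg$ of a true equation behaves unexpectedly.

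Concretely, take $W=\{w\}$, a single world, so $\mathcal{U}(W)=\{\emptyset,W\}$ and (P1) is trivial. Define
\[
N(\emptyset)=W, \qquad N(W)=W .
\]
Then $w\Vdash\neg A$ for every formula $A$. Taking $A\equiv 0=0$, we get $w\Vdash 0=0$ and $w\Vdash\neg(0=0)$, hence $w\Vdash A\to\neg A$. But this model validates $\neg A$ as well, so it cannot refute the instance. I will therefore choose instead
\[
N(\emptyset)=\emptyset, \qquad N(W)=W .
\]
Now $\neg A$ is forced exactly when $A$ is. This violates the axiom $\neg(0=S0)$, since $V(0=S0)=\emptyset$. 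So a single world is not enough; I will use two worlds $w_0<w_1$. Then
\[
\mathcal{U}(W)=\{\emptyset,\{w_1\},W\}.
\]
Positive formulas still have truth sets $\emptyset$ or $W$, but negated formulas may have truth set $\{w_1\}$.

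The choice of $N$ is as follows. I want $V(\neg(0=S0))=W$, so $N(\emptyset)=W$. To refute $(\mathsf{An})$ I want a formula $A$ with $w_0\Vdash A\to\neg A$ and $w_0\not\Vdash\neg A$. I take $A\equiv\neg(0=0)$ and set
\[
N(W)=\{w_1\}, \qquad N(\{w_1\})=\{w_1\}.
\]
Then:
\begin{itemize}
\item $V(\neg(0=0))=N(W)=\{w_1\}$;
\item $V(\neg\neg(0=0))=N(\{w_1\})=\{w_1\}$;
\item therefore $w_0\Vdash A\to\neg A$, because $A$ holds only at $w_1$, where $\neg A$ also holds;
\item but $w_0\not\Vdash\neg A$.
\end{itemize}
I will check (P1) by hand. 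At $w_1$ we have $R(w_1)=\{w_1\}$, and we need $w_1\in N(U)\iff w_1\in N(U\cap\{w_1\})$. This holds: $N(\emptyset)=W\ni w_1$, and $N(\{w_1\})$ and $N(W)$ both contain $w_1$. At $w_0$, $R(w_0)=W$, so (P1) is trivial.

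Next comes soundness. By induction on derivations I will show that if $\Gamma\vdash C$ is derivable in $\mathbf{NA}^{PR}$, then for every world $w$ and every assignment of naturals to free variables, $w\Vdash\bigwedge\Gamma$ implies $w\Vdash C$. The cases are as follows.
\begin{itemize}
\item Structural and logical rules are handled as for Kripke semantics, using monotonicity. Monotonicity itself needs $N(U)$ to be upward closed, which holds since $N$ maps into $\mathcal{U}(W)$.
\item $(\mathsf{N})$: if $V(A)\cap X=V(B)\cap X$ where $X$ is the cone where $\Gamma$ holds, then (P1) gives $w\in N(V(A))\iff w\in N(V(B))$ for $w\in X$, since $R(w)\subseteq X$.
\item Arithmetic axioms are true in $\mathbb{N}$, and $\neg(0=S0)$ holds everywhere by $N(\emptyset)=W$.
\item Equality rules are immediate, since $=$ is standard.
\item $(\mathsf{Ind})$ holds because the domain is the standard $\mathbb{N}$: ordinary induction on $n$ gives $w\Vdash A(n)$ for all $n$ at all worlds satisfying $\Gamma$.
\end{itemize}
Quantifier cases use the constant domain, and $\forall$ uses the upward clause together with monotonicity.

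The main obstacle is the $(\mathsf{N})$ case with contexts. The premises hold only at worlds forcing $\Gamma$, so the equivalence of $A$ and $B$ is only local. This is exactly where (P1) is needed: for $w$ forcing $\Gamma$, the cone $R(w)$ lies inside the region where $V(A)$ and $V(B)$ agree, so $V(A)\cap R(w)=V(B)\cap R(w)$. I will write this step carefully.

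Soundness then gives that $(\neg(0=0)\to\neg\neg(0=0))\to\neg\neg(0=0)$, which fails at $w_0$, is not derivable. This proves the theorem.
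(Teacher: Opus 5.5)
Your proposal is correct and is essentially the paper's proof. You use the same two-world model $w_0\le w_1$ with $N(\emptyset)=W$ and $N(\{w_1\})=N(W)=\{w_1\}$, the same (P1)-based soundness argument for $(\mathsf{N})$, and the same refuting instance $A\equiv\neg(0=0)$. The one divergence is the $(\mathsf{Ind})$ case: the paper detours through definability of forcing in classical arithmetic, whereas your direct metatheoretic induction on $n$ for the property $w\Vdash A(n)$ already suffices.
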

\begin{proof}
Let $\mathcal{M} = \langle W, \leq, D, I, N \rangle$ be the following model.
\begin{itemize}
    \item \textbf{Worlds and Relation}: $W = \{w_0, w_1\}$ with $w_0 \leq w_1$.
    \item \textbf{Upward-Closed Sets}: $\mathcal{U}(W) = \{\emptyset, \{w_1\}, W\}$.
    \item \textbf{Domain}: $D = \mathbb{N}$, representing the standard model of natural numbers.
    \item \textbf{Interpretation}: $I$ interprets constants, function symbols, and $=$ exactly as the standard arithmetic operations and strict identity on $\mathbb{N}$.
    \item \textbf{Negation Function $N$}: defined as $N(\emptyset) = W$, $N(\{w_1\}) = \{w_1\}$, and $N(W) = \{w_1\}$.
    \item The forcing relation $\Vdash$ is defined exactly as in Definition \ref{dfn:forcing}.
\end{itemize}

It is routine to verify that $N$ satisfies the condition \eqref{eq:p1}.

Next, we verify that all axioms and rules of $\mathbf{NA}^{PR}$ are sound in $\mathcal{M}$.
\begin{itemize}
    \item \textbf{Axioms}: The defining equations for primitive recursive functions are trivially sound. 
    \item \textbf{Rule $(\mathsf{N})$}: If $V(A) \cap R(w) = V(B) \cap R(w)$ holds from the premises, \eqref{eq:p1} implies $w \in N(V(A)) \iff w \in N(V(B))$, which preserves the validity. 
    \item \textbf{Axiom $\vdash \neg(0=S0)$}: It is sound because $\mathbb{N} \not\models 0=S0$, meaning $0=S0$ is false everywhere ($V(0=S0) = \emptyset$), yielding $V(\neg(0=S0)) = N(\emptyset) = W$. 
\end{itemize}

Before verifying the soundness of the induction schema, we establish the definability of the forcing relation in classical arithmetic (cf.\ \cite[Lem 4.14]{extended_frames}). For any formula $B(\vec{x})$ and any world $w \in W$, there exists a translation $B'_w(\vec{x})$ such that for any parameters $\vec{n} \in \mathbb{N}$:
\[ w \Vdash B(\vec{n}) \iff \mathbb{N} \models B'_w(\vec{n}) \]
We define $B'_w(\vec{x})$ by induction on the structure of $B$:
\begin{itemize}
    \item $(t_1 = t_2)'_w \equiv t_1 = t_2$.
    \item $(C \to D)'_w \equiv \bigwedge_{w' \ge w} (C'_{w'} \to D'_{w'}).$
    \item $(\forall x C)'_w \equiv \bigwedge_{w' \ge w} \forall x C'_{w'}$.
    \item $(\neg C)'_w \equiv \begin{cases} 0=0 & \text{if } w = w_1 \\ (C'_{w_0} \to 0=S0) \land (C'_{w_1} \to 0=S0) & \text{if } w = w_0 \end{cases}$
\end{itemize}
and $(\cdot)'$ commutes with all other connectives.

We verify this definability property by induction on the complexity of $B$. 

With this definability established, we verify each instance of induction: $A(0)\land \forall x(A(x)\to A(S(x)))\to \forall x A(x)$. Assume that $w\Vdash A(0)\land \forall x(A(x)\to A(S(x)))$. Then, for each $w'\geq w$ and $n\in\mathbb{N}$, $w'\Vdash A(0)$ and $w'\Vdash A(n)\Rightarrow w'\Vdash A(S(n))$ holds. This yields, by the definability of forcing, $\mathbb{N}\models A'_{w'}(0)\land \forall x(A'_{w'}(x)\to A'_{w'}(S(x)))$ and so $\mathbb{N}\models \forall xA'_{w'}(x)$. Therefore, again by the definability of forcing, $w'\Vdash\forall xA(x)$. 

Finally, we show that $(A \to \neg A) \to \neg A$ is not valid in $\mathcal{M}$ for $A \equiv \neg(0=0)$. Since $0=0$ is true in $\mathbb{N}$, $V(0=0) = W$, which yields $V(A) = N(W) = \{w_1\}$ and $V(\neg A) = N(\{w_1\}) = \{w_1\}$. To establish $w_0 \Vdash A \to \neg A$, we check all worlds accessible from $w_0$: at $w_0$, $w_0 \not\Vdash A$ holds, and at $w_1$, both $w_1 \Vdash A$ and $w_1 \Vdash \neg A$ hold. Thus, the implication $A \to \neg A$ is true at $w_0$. However, since $w_0 \notin V(\neg A)$, we have $w_0 \not\Vdash \neg A$. Therefore, $w_0 \not\Vdash (A \to \neg A) \to \neg A$, completing the proof of unprovability.
\end{proof}

\section{Separation of \texorpdfstring{$\mathbf{MA}^{PR}$}{MA with PR} and \texorpdfstring{$\mathbf{HA}^{PR}$}{HA with PR} using Models}\label{sec:sep-ma-ha}

In this section, we define models for $\mathbf{MA}^{PR}$ and construct a specific semantic countermodel to prove that $(\mathsf{ECQ})$ is not derivable in $\mathbf{MA}^{PR}$, establishing a strict separation between $\mathbf{MA}^{PR}$ and $\mathbf{HA}^{PR}$.

\begin{dfn}[Models for Minimal Arithmetic]
    A model for $\mathbf{MA}^{PR}$ is an neighbourhood model where the negation function $N$ satisfies the following additional condition: for any upward-closed set $U \in \mathcal{U}(W)$ and any world $w \in W$, $w \in N(U)$ if and only if for every world $w'$ such that $w' \geq w$, if $w' \in U$, then $w' \in N(W)$ (where $W$ is the set of all worlds).
\end{dfn}

Similarly to the previous section, it is not generally guaranteed that any neighbourhood model satisfying the above condition validates all the axioms of $\mathbf{MA}^{PR}$. However, the neighbourhood model $\mathcal{M}$ we construct below forms a model of $\mathbf{MA}^{PR}$. 

\begin{thm}[Unprovability of $\mathsf{ECQ}$]
$\not\vdash_{\mathbf{MA}^{PR}} \neg(0=0) \to 0=S(0)$ (and thus $\mathsf{ECQ}$).
\end{thm}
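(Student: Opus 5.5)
We will mirror the argument of the previous section by building a small neighbourhood model of $\mathbf{MA}^{PR}$ in which $\neg(0=0)$ is forced at some world while $0=S(0)$, being false in $\mathbb{N}$, is forced nowhere. The simplest choice is a single world $W=\{w\}$, so that $\mathcal{U}(W)=\{\emptyset,W\}$, with domain $\mathbb{N}$ and the standard interpretation. We take $N(\emptyset)=N(W)=W$, i.e.\ every negation is forced. By Lemma \ref{lem:ecq_equiv} this is the relevant instance, and Proposition \ref{prop:s0_0} tells us $0=S0$ cannot hold anywhere in a model where atomic formulas are interpreted in $\mathbb{N}$.

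First we check the structural conditions. Condition \eqref{eq:p1} is trivial, since $R(w)=W$. The extra condition for $\mathbf{MA}^{PR}$ models requires $w\in N(U)$ iff every $w'\ge w$ with $w'\in U$ lies in $N(W)$. Because $N(W)=W$, the right-hand side is always true, and so is the left-hand side; hence the condition holds. Next we verify soundness of every rule of $\mathbf{MA}^{PR}$ in this model.
\begin{itemize}
\item The defining equations, $(\mathsf{Refl})$ and $(\mathsf{Leibniz})$ are sound because atoms are evaluated in the standard model.
\item The axiom $\neg(0=S0)$ is sound since $N(\emptyset)=W$.
\item Rule $(\mathsf{N})$ is sound because its conclusion $\neg B$ is forced everywhere.
\item Rule $(\mathsf{An})$ is sound for the same reason: its conclusion $\neg A$ is always forced.
\end{itemize}
For a general sequent $\Gamma\vdash C$ we read validity as: at every world, if all of $\Gamma$ is forced then $C$ is forced. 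The logical rules are then sound by the usual Kripke argument together with monotonicity.

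The step requiring genuine care is induction, which we handle as in the previous section. We define a translation $B'_w$ into classical arithmetic with $(\neg C)'_w\equiv 0=0$, $(C\to D)'_w\equiv C'_w\to D'_w$, and all other connectives and quantifiers commuting. By induction on $B$ we show $w\Vdash B(\vec n)\iff \mathbb{N}\models B'_w(\vec n)$; with one world this is immediate. Soundness of $(\mathsf{Ind})$, including the side context $\Gamma$, then follows from classical induction in $\mathbb{N}$ applied to the formula $\bigwedge\Gamma'\to A'(x)$. In fact, since the model has a single world, it is just the classical structure $\mathbb{N}$ in which every negated formula is read as true, and this makes the whole verification routine.

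Finally we conclude. We have $w\Vdash\neg(0=0)$ because $N(W)=W$, while $w\not\Vdash 0=S(0)$. Hence $w\not\Vdash \neg(0=0)\to 0=S(0)$, so this formula is not derivable in $\mathbf{MA}^{PR}$. By Lemma \ref{lem:ecq_equiv}, $(\mathsf{ECQ})$ is therefore not admissible in $\mathbf{MA}^{PR}$. The main obstacle we anticipate is making the soundness of $(\mathsf{Ind})$ with contexts fully rigorous. If a one-world model were disallowed for some reason, we would instead take two worlds $w_0\le w_1$ with $N$ constantly $W$, and the same argument would go through.
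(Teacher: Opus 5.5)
Your proposal is correct and is essentially the paper's proof: you use the same one-world model over $\mathbb{N}$ with $N(U)=W$ for every $U$, in which every negation (in particular $\neg(0=0)$) is forced while $0=S(0)$ fails. Your checks of $(\mathsf{N})$, $(\mathsf{An})$, the $\mathbf{MA}^{PR}$ condition and $(\mathsf{Ind})$ just spell out what the paper calls trivial; one small correction is that $0=S(0)$ fails at every world directly by the atomic forcing clause, not by Proposition \ref{prop:s0_0}, which asserts something different.
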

\begin{proof}
We construct a specific structure $\mathcal{M}$ to act as a countermodel for $\mathbf{MA}^{PR}$.
\begin{itemize}
	\item \textbf{Worlds and Relation:} $W = \{w_0\}$, with the trivial relation $w_0 \leq w_0$.
	\item \textbf{Domain:} $D = \mathbb{N}$, the standard natural numbers.
	\item \textbf{Interpretation:} Constants, function symbols, and the equality symbol $=$ are interpreted exactly as the standard arithmetic operations and strict identity on $\mathbb{N}$.
	\item \textbf{Negation Function $N$:} We define $N(U) = W$ for all $U \in \mathcal{U}(W)$. 
\end{itemize}

Next, we verify the soundness of $\mathbf{MA}^{PR}$ in $\mathcal{M}$.
Since the domain is the standard model $\mathbb{N}$ and all functions and relations are interpreted in the standard way, all standard arithmetic axioms, including the defining equations for primitive recursive functions and the induction schema, are trivially satisfied. Furthermore, our definition of the negation function $N$ ensures that the condition for $(\mathsf{An})$ holds. Thus, $\mathcal{M}$ is a valid model of $\mathbf{MA}^{PR}$.

Finally, we evaluate the formula $\neg(0=0) \to 0=S(0)$ to refute $\mathsf{ECQ}$ by Lemma \ref{lem:ecq_equiv}.
\begin{itemize}
	\item \textbf{Antecedent:} $\neg(0=0)$ evaluates to true. Since we defined the negation function such that $N(U) = W$ for all $U$, any negated formula is valid in our model.
	\item \textbf{Consequent:} $0=S(0)$ evaluates to false. Under the standard interpretation on $\mathbb{N}$, $0=S(0)$ is false.
\end{itemize}
Because the antecedent is true and the consequent is false, the implication as a whole evaluates to false. This implies that $\neg(0=0) \to 0=S(0)$ is not valid in \(\mathcal{M}\).
\end{proof}

\begin{rem}
Similarly to the previous section, because all interpretations on $\mathbb{N}$ are entirely standard, this exact model also serves to separate $\mathbf{MA}$ and $\mathbf{HA}$.
\end{rem}

\section{Separation of Arithmetical Systems with Different Languages}\label{sec:sep-lang}

To highlight the role of primitive recursive functions in weak arithmetical systems, we show that the basic systems $\mathbf{NA}$ and $\mathbf{MA}$ are strictly weaker than their primitive recursive counterparts $\mathbf{NA}^{PR}$ and $\mathbf{MA}^{PR}$.

It is well-known $\mathbf{HA}^{PR}$ is mutual interpretable with HA formulated only with $S, +, \cdot$ (cf. \cite{avigad2022mathematical}). In this section, we define $\mathbf{MA}$ and $\mathbf{NA}$ in the same simple language. Then, we show that the relationship between $\mathbf{MA}^{PR}$ and $\mathbf{MA}$, and that between $\mathbf{NA}^{PR}$ and $\mathbf{NA}$, are entirely different from the relationship between $\mathbf{HA}^{PR}$ and $\mathbf{HA}$.

\begin{dfn}[Arithmetic $\mathbf{NA}$ and $\mathbf{MA}$ in a minimal language]
    The standard language $\mathcal{L}_A$ consists of a constant symbol $0$, unary function $S$, binary functions $+$ and $\cdot$ (cf. \cite{avigad2022mathematical}).
    The proof system $\mathbf{NA}$ is obtained from $\mathsf{LN}$ by adopting the same equality rules and induction schema as $\mathbf{NA}^{PR}$, but replacing the defining equations of all primitive recursive functions with the standard basic axioms for $+, \cdot$, specifically: 
    \begin{multicols}{2}
    \begin{itemize}
        \item $\forall x \neg(0 = S(x))$
        \item $\forall x \forall y ((S(x) = S(y)) \to x = y)$
        \item $\forall x (x + 0 = x)$
        \item $\forall x \forall y (x + S(y) = S(x + y))$
        \item $\forall x (x \cdot 0 = 0)$
        \item $\forall x \forall y (x \cdot S(y) = (x \cdot y) + x)$
    \end{itemize}
    \end{multicols}
    
    Minimal Arithmetic ($\mathbf{MA}$) is defined as $\mathbf{NA} + (\mathsf{An})$.
\end{dfn}

\begin{thm}[Separation of $\mathbf{NA}^{PR}$ from $\mathbf{NA}$]
    There is a positive formula $A$ such that $\vdash_{\mathbf{NA}^{PR}} A$ and $\vdash_{\mathbf{MA}^{PR}} A$, but $\not\vdash_{\mathbf{NA}} A$ and $\not\vdash_{\mathbf{MA}} A$.
\end{thm}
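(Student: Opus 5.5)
The plan is to take a positive formula that is true in $\mathbb{N}$ and provable in $\mathbf{NA}^{PR}$ with the help of the predecessor function, and to refute it in a classical structure in which $\neg$ is interpreted degenerately. I take
\[
A \;\equiv\; S(S(0)) = 0 \to 0 = S(0).
\]
This is positive, since it contains no $\neg$.

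\textbf{Derivability in $\mathbf{NA}^{PR}$.} Assume $S(S(0))=0$. By $(\mathsf{Leibniz})$ applied to $P(S(S(0)))=S(0)$, which is an instance of the defining equation of $P$, I obtain $P(0)=S(0)$. Combining this with $P(0)=0$ gives $S(0)=0$, and hence $0=S(0)$ by symmetry of equality. Alternatively, Proposition~\ref{prop:s0_0} already gives $S0=0\to B$ for every $B$. Since $\mathbf{MA}^{PR}$ extends $\mathbf{NA}^{PR}$, we also get $\vdash_{\mathbf{MA}^{PR}} A$.

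\textbf{Countermodel for $\mathbf{NA}$ and $\mathbf{MA}$.} I use the classical structure $\mathbb{Z}/2\mathbb{Z}=\{0,1\}$ with $S(x)=x+1 \bmod 2$ and with $+$ and $\cdot$ taken mod $2$. Connectives and quantifiers are evaluated classically, except that every formula $\neg B$ is declared true. This is the one-world case of the neighbourhood semantics with $N(U)=W$ for all $U$. Soundness of each ingredient is checked as follows.
\begin{itemize}
\item $(\mathsf{N})$ and $(\mathsf{An})$ hold because their conclusions $\neg B$ are always true. This is exactly why the model also works for $\mathbf{MA}$. $(\mathsf{ECQ})$ is not claimed, and indeed fails.
\item The axiom $\forall x\neg(0=S(x))$ holds trivially.
\item Injectivity of $S$ holds because $S$ is a bijection on $\mathbb{Z}/2\mathbb{Z}$.
\item The recursion equations for $+$ and $\cdot$ hold because arithmetic mod $2$ is a ring homomorphic image of $\mathbb{N}$.
\item The equality rules are sound because $=$ is interpreted as identity.
\item The induction rule is sound because every element of the domain is reached from $0$ by finitely many applications of $S$. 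Whatever the truth value of $\neg$-subformulas, any $A(x)$ with $A(0)$ and $A(a)\to A(S(a))$ true for all $a$ is therefore true everywhere.
\end{itemize}
In this structure $S(S(0))=0$ is true while $0=S(0)$ is false, so $A$ fails. Hence $\not\vdash_{\mathbf{MA}} A$, and a fortiori $\not\vdash_{\mathbf{NA}} A$.

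\textbf{Main obstacle.} The delicate point is verifying soundness of the sequent rules with side formulas under this nonstandard clause for $\neg$. The argument is a straightforward induction on derivations, showing that for every assignment, if all antecedent formulas are true then so is the succedent, using the classical clauses for the other connectives. I expect no difficulty beyond bookkeeping. The conceptual point to stress is that in $\mathcal{L}_A$ the only constraint on $0\neq S(x)$ is phrased through $\neg$, and this constraint is vacuous in the model. In $\mathbf{NA}^{PR}$, by contrast, the predecessor function converts the negative axiom into positive consequences via Proposition~\ref{prop:s0_0}.
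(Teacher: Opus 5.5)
Your proposal is correct and follows essentially the same route as the paper, which uses the same two-element structure $\{0,1\}$ (with $S$ swapping the elements, arithmetic mod $2$, and every negated formula declared true) and refutes $\forall x(S(x)=0\to S(0)=0)$ at $x=1$. Your closed instance $S(S(0))=0\to 0=S(0)$ is an equally good witness, and your derivation of it directly from the defining equations of $P$ and $(\mathsf{Leibniz})$ is more explicit than the paper's appeal to Proposition~\ref{prop:s0_0}.
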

\begin{proof}
    Let $A \equiv \forall x(S(x)=0 \to S(0)=0)$. In $\mathbf{NA}^{PR}$, Proposition \ref{prop:s0_0} guarantees that $\vdash_{\mathbf{NA}^{PR}} S(x)=0 \to S(0)=0$, hence $\vdash_{\mathbf{NA}^{PR}} A$. The same holds for $\mathbf{MA}^{PR}$.
    
    To show $\not\vdash_{\mathbf{MA}} A$ (and therefore $\not\vdash_{\mathbf{NA}} A$), we construct a structure $\mathcal{M}$ to act as a countermodel (cf.\ \cite{mol201808}). Because minimal logic is a subsystem of classical logic, any classical model satisfying the axioms of $\mathbf{MA}$ is a valid model.
    \begin{itemize}
        \item \textbf{Domain:} $D = \{0, 1\}$.
        \item \textbf{Constants and Successor:} $0^{\mathcal{M}} = 0$. We define $S^{\mathcal{M}}(0) = 1$ and $S^{\mathcal{M}}(1) = 0$.
        \item \textbf{Addition and Multiplication:} standard modulo 2 arithmetic.
        \item \textbf{Equality:} standard strict identity.
        \item \textbf{Negation ($\neg$):} Because minimal logic treats negation very weakly, we are free to interpret any negated formula $\neg A$ as \textbf{True}.
    \end{itemize}

    We concisely verify the soundness of $\mathbf{MA}$ in $\mathcal{M}$:
    \begin{enumerate}
        \item \textbf{Equality and Leibniz's Law:} Satisfied because $=$ is interpreted as strict identity.
        \item \textbf{Successor Injectivity:} $S$ swaps 0 and 1, making it a bijection.
        \item \textbf{Zero is not a Successor:} $\neg(0 = S(x))$ evaluates to True, since all negated formulas are True.
        \item \textbf{Ring Equations:} The modulo 2 operations satisfy the axioms for $+$ and $\cdot$.
        \item \textbf{Induction Schema:} $A(0) \land \forall x(A(x) \to A(S(x))) \to \forall x A(x)$. Any formula $A(x)$ defines a set $X=\{a \mid \mathcal{M} \models A(a)\} \subseteq \{0,1\}$. In each case, we can show the schema easily.
    \end{enumerate}

    Finally, we evaluate $A$ in the model $\mathcal{M}$. Let $x = 1$.
    \begin{itemize}
         \item \textbf{Antecedent:} $S(1) = 0$ evaluates to true. In our model, $S^{\mathcal{M}}(1)$ is $0$, which reduces to $0=0$.
        \item \textbf{Consequent:} $S(0) = 0$ evaluates to false. In our model, $S^{\mathcal{M}}(0)$ is $1$, which reduces to $1=0$.
    \end{itemize}
    Because the antecedent is true and the consequent is false, the implication as a whole evaluates to false. Thus $\not\vdash_{\mathbf{MA}} A$, and consequently $\not\vdash_{\mathbf{NA}} A$.
\end{proof}

\begin{rem}
Since the interpretation of the arithmetical symbols in $\mathcal{M}$ in \ref{sec:sep-na-ma} is exactly the same as that in the standard domain $\mathbb{N}$, this model also satisfies the axioms of $\mathbf{NA}$ restricted to the basic language $\mathcal{L}_A$. Thus, this countermodel also successfully separates $\mathbf{NA}$ and $\mathbf{MA}$.
\end{rem}

\section{Open Problems}

Our investigations also suggest further directions for future research. 

\vspace{-1em}

\paragraph*{Representability of Primitive Recursive Functions}
As demonstrated in Section \ref{sec:sep-lang}, primitive recursive function symbols play an essential role in subminimal arithmetic. While primitive recursive functions are representable in $\mathbf{HA}$, our results show that they are not representable in $\mathbf{MA}$. Consequently, the exact logical strength required for their representability must lie strictly between $\mathbf{MA}$ and $\mathbf{HA}$. This naturally raises the question: what specific logical rules or language extensions exactly characterize the representability of primitive recursive functions over weak arithmetic like $\mathbf{NA}$ or $\mathbf{MA}$?

\vspace{-1em}

\paragraph*{Robustness of Derivability}
We proved $\vdash_{\mathbf{NA}^{PR}} S0=0 \to A$ (Proposition \ref{prop:s0_0}). It is an open question how far the logic can be weakened while preserving this derivability; for instance, it remains provable even if we replace $(\mathsf{N})$ with Negative Ex Falso Quodlibet $(A \land \neg A) \to \neg B$. 

\vspace{-1em}

\paragraph*{Arithmetical Characterisations of Intuitionistic Logical Axioms}
We seek to arithmetically characterize other intuitionistic logical axioms, such as intermediate negations between $(\mathsf{N})$ and $(\mathsf{An})$ or $(\mathsf{ECQ})$ (cf.\ \cite{bsl24752}), Double Negation Elimination, and the Law of Contraposition. As an encouraging example, we recently found that the equivalence between the Law of Excluded Middle ($\text{LEM}$) and the Least Number Principle ($\text{LNP}$) over $\mathbf{HA}$ holds even over the much weaker system $\mathbf{NA}^{PR}$. Systematically analyzing such equivalences will further clarify the deep connections between logic and arithmetic.

\end{document}